\def\ps@headings{%
\def\@oddhead{\mbox{}\scriptsize\rightmark \hfil \thepage}%
\def\@evenhead{\scriptsize\thepage \hfil \leftmark\mbox{}}%
\def\@oddfoot{}%
\def\@evenfoot{}}
\newtheorem{lemma}{Lemma}
\newfont{\mbb}{msbm10 scaled 1100}
\definecolor{red}{RGB}{210,20,50}
\newcommand{\red}{\textcolor{red}}
\definecolor{lblue}{RGB}{30,160,240}
\newcommand{\lblue}{\textcolor{lblue}}
\definecolor{green}{RGB}{75,180,130}
\definecolor{blue}{RGB}{0,0,255}
\newcommand{\blue}{\textcolor{blue}}
\definecolor{magenta}{RGB}{255,0,255}
\newcommand{\magenta}{\textcolor{magenta}}
\definecolor{orange}{RGB}{255,128,0}
\newcommand{\orange}{\textcolor{orange}}
\definecolor{dgreen}{RGB}{0,151,0}
\def\ind{{\mathds{1}}}
\def\N{{\mathds{N}}} 
\def\R{{\mathds{R}}} 
\def\b0{{\bf 0}}
\begin{document}

\title{Optimal relay location and power allocation for low SNR broadcast relay channels}

\author{
\authorblockN{Mohit Thakur}
\authorblockA{Institute for communications engineering,\\
Technische Universit\"{a}t M\"{u}nchen,\\
80290, M\"{u}nchen, Germany.\\
Email: mohit.thakur@tum.de}
\and
\authorblockN{Nadia Fawaz}
\authorblockA{Research Laboratory for Electronics,\\
Massachusetts Institute of Technology,\\
Cambridge, MA, USA.\\
Email: nfawaz@mit.edu}
\and
\authorblockN{Muriel M\'{e}dard}
\authorblockA{Research Laboratory for Electronics,\\
Massachusetts Institute of Technology,\\
Cambridge, MA, USA.\\
Email: medard@mit.edu}
}

\vspace{-6mm}
\maketitle
\vspace{-2mm}

\begin{abstract}
We consider the broadcast relay channel (BRC), where a single source
transmits to multiple destinations with the help of a relay, in the
limit of a large bandwidth. We address the problem of optimal relay
positioning and power allocations at source and relay, to maximize
the multicast rate from source to all destinations. To solve such a network planning problem, we develop a
three-faceted approach based on an underlying information theoretic
model, computational geometric aspects, and network optimization
tools. Firstly, assuming superposition coding and
frequency division between the source and the relay, the information
theoretic framework yields a hypergraph model of the wideband BRC, which captures the
dependency of achievable rate-tuples on the network topology. As the relay position varies, so does the set of hyperarcs constituting the hypergraph, rendering the combinatorial nature of optimization problem. We show that the convex hull C of all nodes in the 2-D plane can be divided into disjoint regions corresponding to distinct hyperarcs sets. These sets are obtained by superimposing all k-th order Voronoi tessellation of C. We propose
an easy and efficient algorithm to compute all hyperarc sets, and
prove they are polynomially bounded. Then, we circumvent the
combinatorial nature of the problem by introducing continuous switch
functions, that allows adapting the network hypergraph in
a continuous manner. Using this switched hypergraph approach, we
model the original problem as a continuous yet non-convex network
optimization program. Ultimately, availing on the techniques of
geometric programming and $p$-norm surrogate approximation, we
derive a good convex approximation. We provide a detailed
characterization of the problem for collinearly located
destinations, and then give a generalization for arbitrarily located
destinations. Finally, we show strong gains for the optimal relay
positioning compared to seemingly interesting positions.
\end{abstract}

\begin{keywords}
Low SNR, computational geometry, network optimization.
\end{keywords}

\footnotetext{
ITMANET - 6915101: This material is based upon work under subcontract: 18870740-37362-C,
issued by Stanford University and supported by the Defense Advanced Research
Projects Agency (DARPA).}

\vspace{-4mm}
\section{INTRODUCTION} \label{sec:Introduction}
\vspace{-2mm}
Next-generation wireless standards, such as 3GPP Long Term
Evolution-Advanced (LTE-A) standard \cite{3GPP-TR2010}, propose
relays as a mean to extend cellular coverage or to increase data
rates. More specifically, LTE-A  defines relays of Type I as
coverage-extension relays which allow a base station (BTS) to reach
uncovered users in a cell, and relays of Type II as relays which allow to
increase the communication rate of a user already covered through a
direct link to the BTS \cite{3GPP-TR2010,Yang-Hu-ComMag2009}. In
terms of cellular deployment, a natural and practical question
arises as to where the relay node should be deployed.

In this paper, with the downlink of a cellular system with relays in
mind,  we address the aforementioned question for the broadcast
relay channel (BRC), which consists of a single source broadcasting
to multiple destinations with the help of a relay. In this paper, we
focus on  the wideband regime of wireless relay networks, also
denominated low signal-to-noise ratio (SNR) regime because power is
shared among a large number of degrees of freedom, making the
average SNR per degree of freedom low. We would like to point out
that addressing the low-SNR regime is relevant in next generation
cellular systems. Indeed, considering LTE, large bandwidths--- up to 20
MHz--- can be supported by all terminals (\cite{Astely-Dahlman-ComMag2009}). Due to power constraints in the low SNR regime, relays appear as a meaningful and natural way to increase rate and reliability.

Previous results on wireless systems in the low-SNR regime include
the capacity of point-to-point additive white Gaussian noise (AWGN)
channel \cite{Shannon-1949}, and multipath fading channel
\cite{Kennedy-1969,Gallager-1968,Telatar-Tse-2000,Medard-Gallager-2002,Verdu-2002,Subramanian-Hajek-2002,
Lun-Medard-AbouFaycal-2004}, both equal to the received SNR:
$C_{Fading}=C_{AWGN}= \frac{P}{N_0} = \displaystyle\lim_{W
\rightarrow \infty} W \log \big(1+\frac{P}{W N_0}\big)$; the
capacity of the multiple input multiple output (MIMO) channel
\cite{Zheng-Tse-2002,Ray-Medard-Zheng-2007}; the capacity region of
the AWGN broadcast channel (BC)
\cite{Cover-1972,ElGamal-Cover-1980,McEliece-Swanson-1987}, and AWGN
multiple access channel (MAC) \cite{Gallager-ITtrans1985}; and
bounds on the capacity of the non-coherent multipath fading relay
channel \cite{Fawaz-Medard-ISIT2010}. From these works, a conclusion
can be drawn on wireless systems in the low-SNR regime; the
major impairment in the low-SNR regime is neither multipath fading
nor interference, but noise, which is in contrast with the high-SNR
regime. Formulating the argument more concretely, in the presence of
multipath fading in the low-SNR, the same rates as the AWGN system
with the same received SNR can be achieved using non-coherent peaky
signals whereas spread-spectrum signals perform poorly. Moreover,
the low-SNR regime is not interference-limited: in particular, all
sources in the low-SNR MAC can achieve their interference-free
point-to-point capacity to the destination. Based on this
observation, the authors proposed in a recent work
\cite{Thakur-Medard-Globecom2010} an equivalent hypergraph model for
the low-SNR AWGN MAC and BC.  Then they used these models to build
an achievable hypergraph model for a more complex wireless network
with fixed sources, relays, and destinations, and showed that
optimizing power for maximizing multiple session rates boils down to
a straightforward linear program.

In this paper, we take a step forward by simultaneously optimizing the relay location and the power allocation, to maximize the multicast rate from a source $s$ to a set of destinations $T$.
Using concepts from information theory, computational geometry and network
optimization, we develop a comprehensive and efficient way to solve
this problem, which can be broadly divided into three parts:
\begin{enumerate}

\item \textbf{BRC hypergraph model: }
We propose a hypergraph model for the low-SNR BRC, which depends on
the topology of the network, essentially the placement of nodes on a
$2$-D plane. Given the source and destinations positions, computing
the hyperarcs in the BRC hypergraph model requires to get the
ordering of nodes in increasing distances from the source and relay,
for all relay positions (as the relay is not initially given). This problem
can be modeled as an \textit{ordered $k$-nearest neighbor problem},
for which we propose a solution based on superimposing the Voronoi
tessellations of all $k-1$ orders, where $k$ spans the destination
set.


\item  \textbf{Continuous hypergraph variations: }
For fixed source and destination positions, when the relay position
varies, the the network hypergraph changes accordingly rendering the problem combinatorial. Consequently, traditional
network optimization techniques cannot be applied directly, as they
assume a fixed given hypergraph. To circumvent this hard
combinatorial nature, we introduce continuous switch functions which
allow to change the network hypergraph in a continuous manner as the
relay position changes. Ultimately, this allows us to cast the
problem as a continuous optimization problem.

\item \textbf{Convex approximation:}
The resulting continuous network optimization problem is non-convex.
However, using geometric programming (GP) and $p$-norm approximation
techniques, we provide a good convex approximation of the original
problem to which standard convex optimization techniques can be
applied \cite{Boyd-book2004}. It should be noted that the problem is
NP-Hard in its original form mainly due to combinatorial nature and
continuous non-convex constraints.

\end{enumerate}

Hereafter, the paper is as follows. In Sections~\ref{sec:SysModel}
and III, we build the system model and formulate the general
problem, respectively. In  Section~\ref{sec:Collinear}, we solve the
problem for collinearly located destination nodes, and introduce
algorithms to compute distinct hypergraphs for various relay
positions. Section~\ref{sec:Arbitrary} extends to the general
problem case for an arbitrary topology, finally leading to the
conclusion in Section~\ref{sec:Conclusion}.

\color{black}
\section{Low {SNR} system model}\label{sec:SysModel}
Notations: $\N$ and $\R$ denote the sets of non-negative integers,
and real numbers, respectively. Let $m\in \N$ and $\N_m
\triangleq \{1,\ldots,m\}$ and Let $S$ be a set, the indicator function of $S$ is defined by $\ind_S(x)=1$ if $x\in S$, $\ind_S(x)=0$
if $x\notin S$.

In this section, we first recall the equivalent hypergraph models of
the wideband BC and MAC, then we use them to build an achievable
hypergraph model of the BRC, and finally we formulate the
optimization problem. In hypergraph models, a hyperarc $(u,v_1
v_2 \ldots v_k)$ of \emph{size $k$} connects a transmitter $u$ to an
ordered (increasing order of distance from transmitter $u$) set of $k$ receivers $\{v_1,v_2,\ldots,v_k\}$, all of which
can decode a message sent over the hyperarc equally reliably. Here, $u \notin \{v_1,v_2,\ldots,v_k\}$. Two
hyperarcs are disjoint if either they have different sources, or
different ordered receiver sets or both, e.g. $(u,v_{1}v_{2})$ is
disjoint from $(u,v_{2}v_{1})$ and $(u',v_{1}v_{2})$. Messages sent
over any pair of disjoint hyperarcs are independent. A hyperarc is
said to be \emph{activated} if its capacity is non-zero.

\subsection{Wideband BC and MAC model}\label{sec:BC-MAC}
\begin{figure*}[tp]
\begin{center}
\psfrag{s}[cc][cc]{{\small $s$}} \psfrag{d1}[cc][cc]{{\small $d_1$}}
\psfrag{d2}[cc][cc]{{\small $d_2$}} \psfrag{0}[tc][tc]{{\small $0$}}
\psfrag{Ps}[cc][cc]{{\small $P$}} \psfrag{h1}[cc][cc]{{\small
$h_1$}} \psfrag{h2}[cc][cc]{{\small $h_2$}}
\psfrag{N1}[cc][cc]{{\small $N_0$}} \psfrag{N2}[cc][cc]{{\small
$N_0$}} \subfigure[BC]{
\includegraphics[width=0.24\columnwidth]{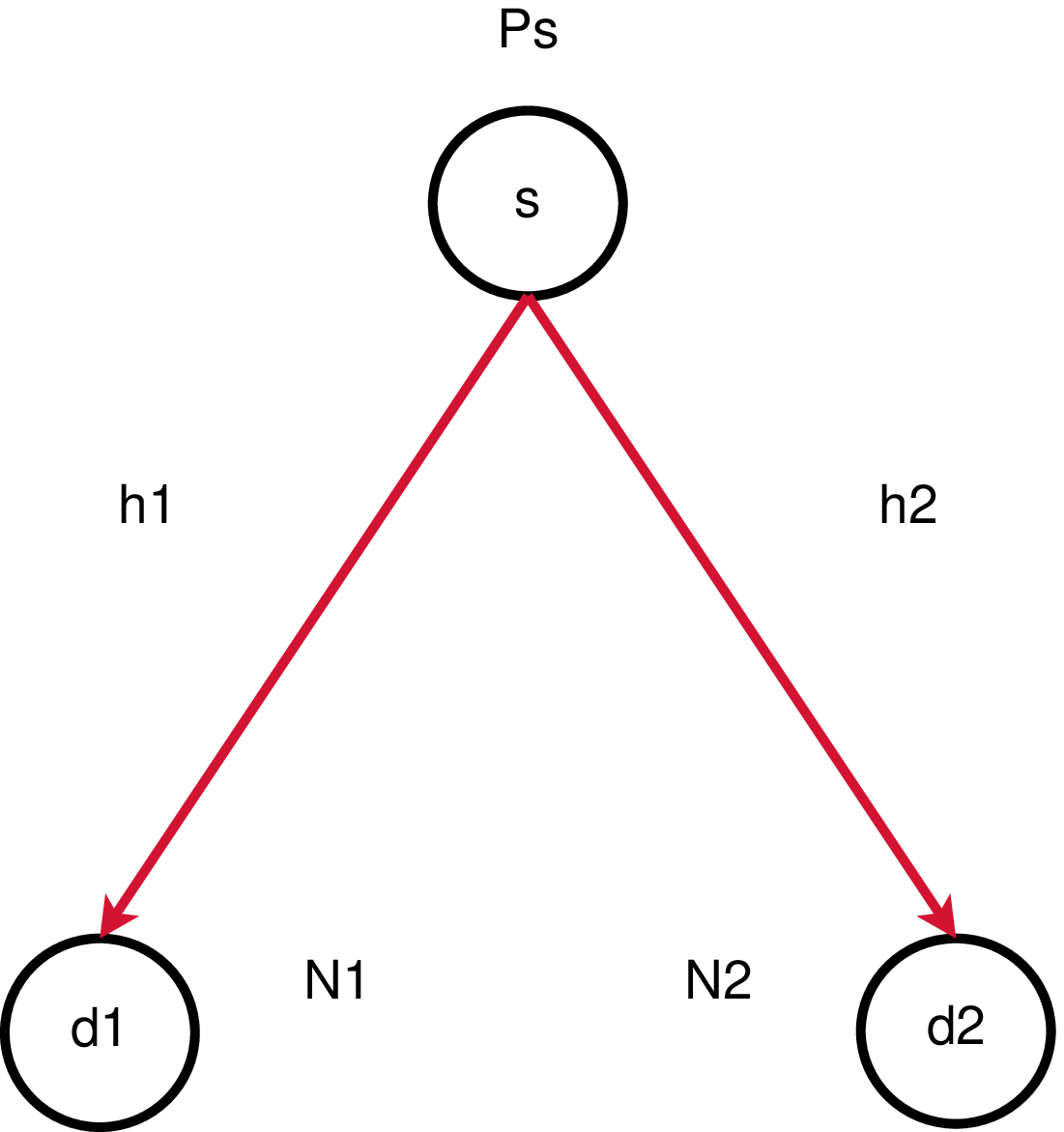}
\label{fig:BC} } \quad \psfrag{Cc}[cc][cc]{{\small \red{$R_c$}}}
\psfrag{Cb1}[cc][cc]{{\small \magenta{$R_1$}}}
\psfrag{Cb2}[cc][cc]{{\small \orange{$R_2$}}}
\subfigure[Wideband BC equivalent hypergraph]{
\includegraphics[width=0.26\columnwidth]{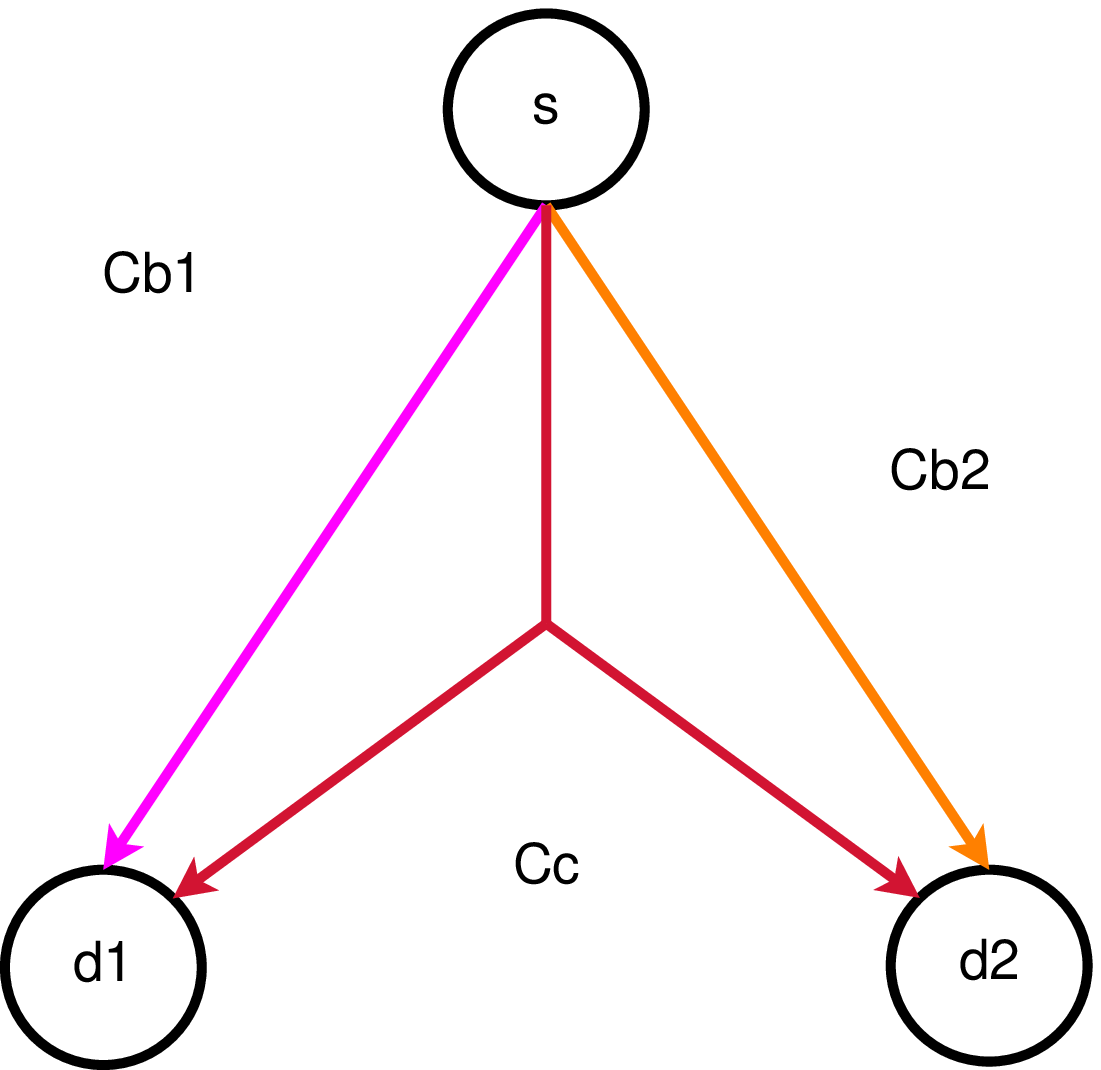}
\label{fig:BC-hypergraph} } \quad \psfrag{s1}[cc][cc]{{\small
$s_1$}} \psfrag{s2}[cc][cc]{{\small $s_2$}}
\psfrag{d}[cc][cc]{{\small $d$}} \psfrag{P1}[cc][cc]{{\small $P_1$}}
\psfrag{P2}[cc][cc]{{\small $P_2$}} \psfrag{N0}[cc][cc]{{\small
$N_0$}} \subfigure[MAC]{
\includegraphics[width=0.25\columnwidth]{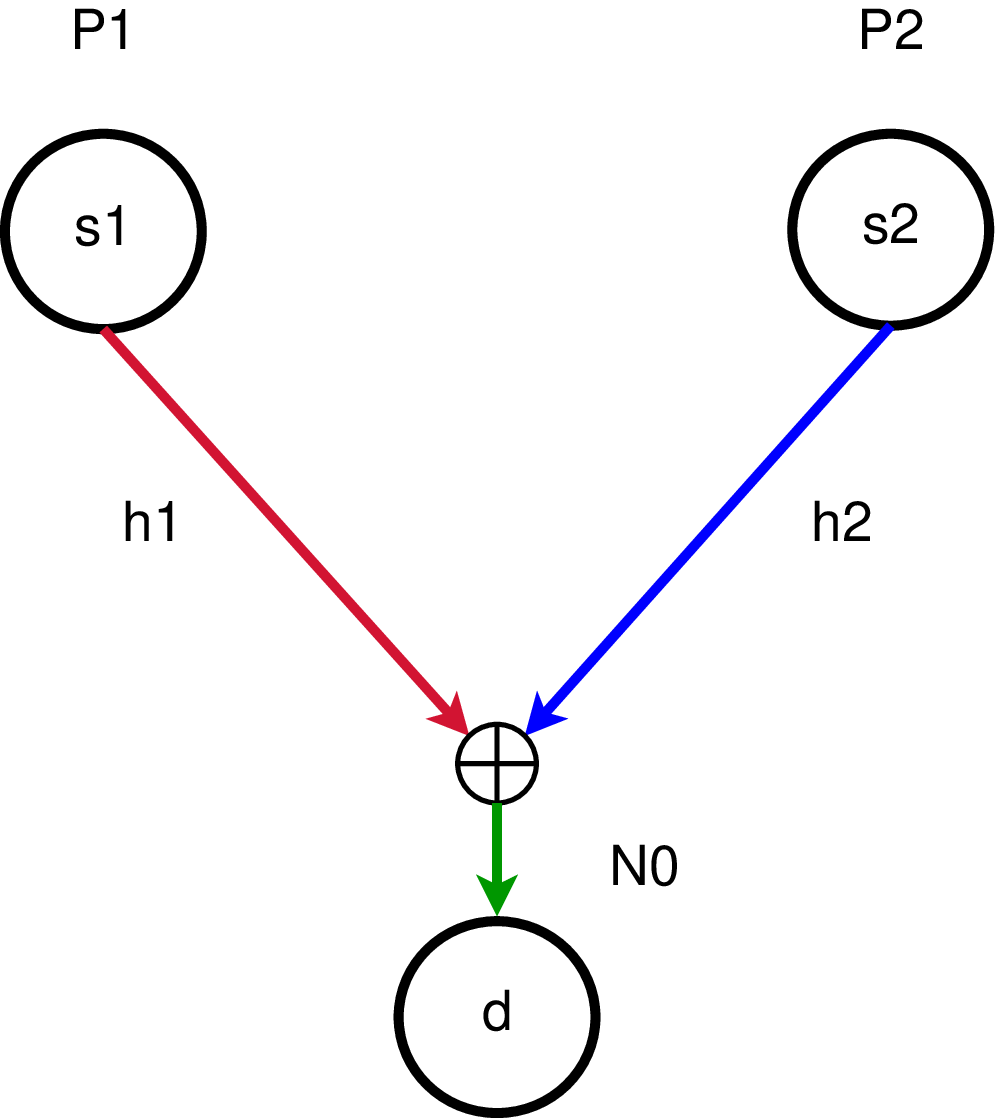}
\label{fig:MAC} } \quad \psfrag{R1}[tc][tc]{{\small $R_1$}}
\psfrag{R2}[br][br]{{\small $R_2$}} \psfrag{C1}[tc][tc]{{\small
\red{$h_1^2\frac{P_1}{N_0}$}}} \psfrag{C2}[cr][cr]{{\small
$C2=h_2^2\frac{P_2}{N_0}$}} \psfrag{C2bis}[tc][tc]{{\small
\blue{$h_2^2\frac{P_2}{N_0}$}}} \subfigure[Wideband MAC equivalent
hypergraph]{
\includegraphics[width=0.26\columnwidth]{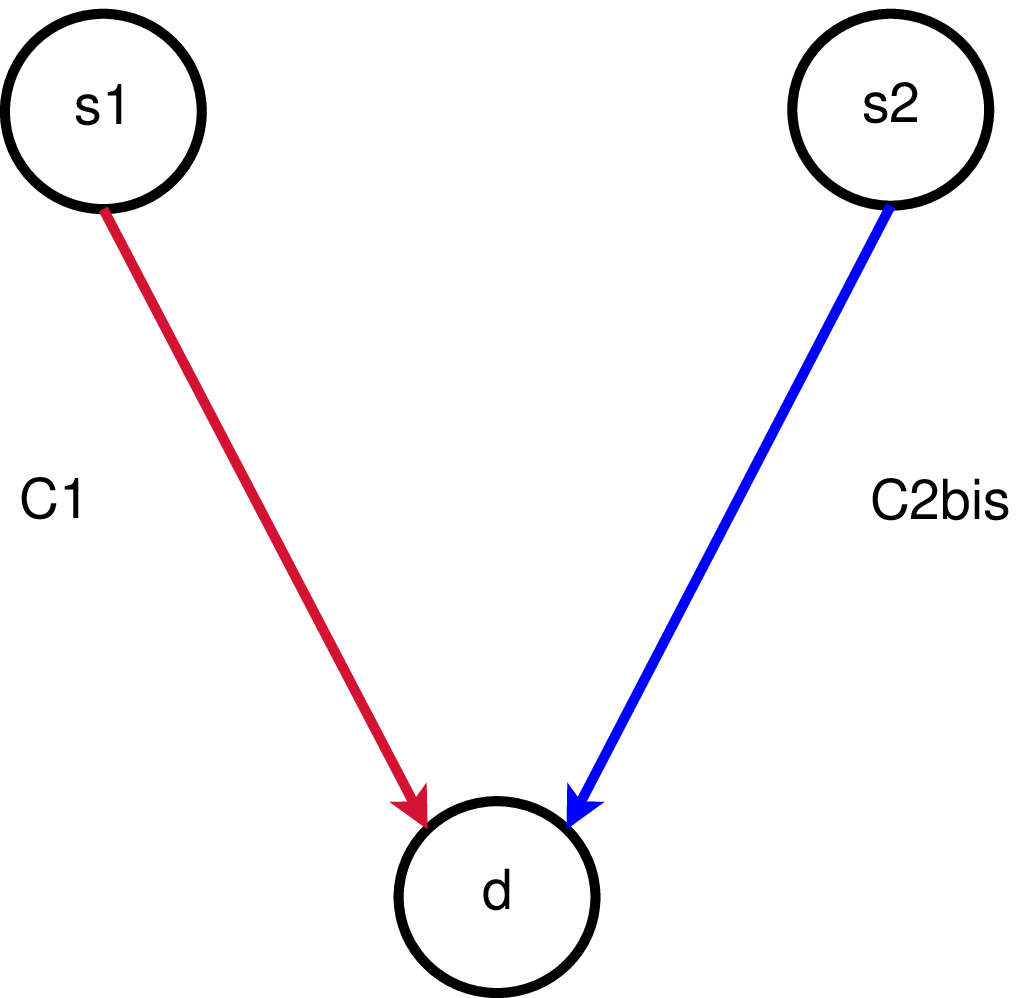}
\label{fig:MAC-hypergraph} } \quad \psfrag{s}[cc][cc]{{\small $s$}}
\psfrag{r}[cc][cc]{{\small $r$}} \psfrag{d1}[cc][cc]{{\small $d_1$}}
\psfrag{d2}[cc][cc]{{\small $d_2$}} \psfrag{Ps}[cc][cc]{{\small
$P_s$}} \psfrag{Pr}[cc][cc]{{\small $P_r$}}
\psfrag{N0}[cc][cc]{{\small $N_0$}} \psfrag{hs1}[cc][cc]{{\small
$h_{1s}$}} \psfrag{hs2}[cc][cc]{{\small $h_{2s}$}}
\psfrag{hsr}[cc][cc]{{\small $h_{rs}$}} \psfrag{hr1}[cc][cc]{{\small
$h_{1r}$}} \psfrag{hr2}[cc][cc]{{\small $h_{2r}$}} \subfigure[BRC]{
\includegraphics[width=0.29\columnwidth]{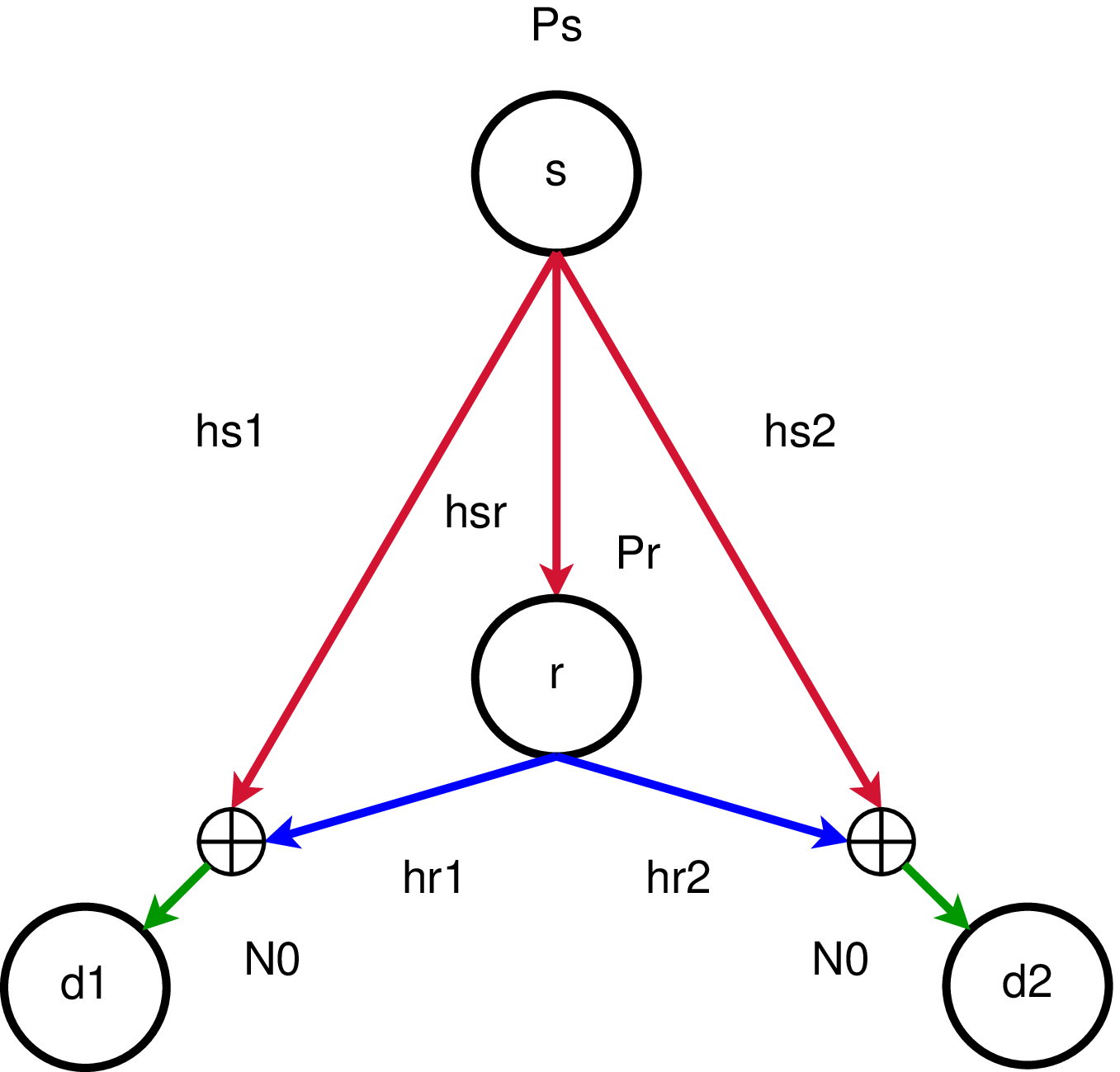}
\label{fig:BRC} } \quad \psfrag{R1}[cc][cc]{{\small
\magenta{$r_0$}}} \psfrag{R2}[cc][cc]{{\small \red{$r_1$}}}
\psfrag{R3}[cc][cc]{{\small \orange{$r_2$}}}
\psfrag{R4}[cc][cc]{{\small \blue{$r_3$}}}
\psfrag{R5}[cc][cc]{{\small \lblue{$r_4$}}}
\subfigure[Wideband BRC achievable hypergraph]{
\includegraphics[width=0.28\columnwidth]{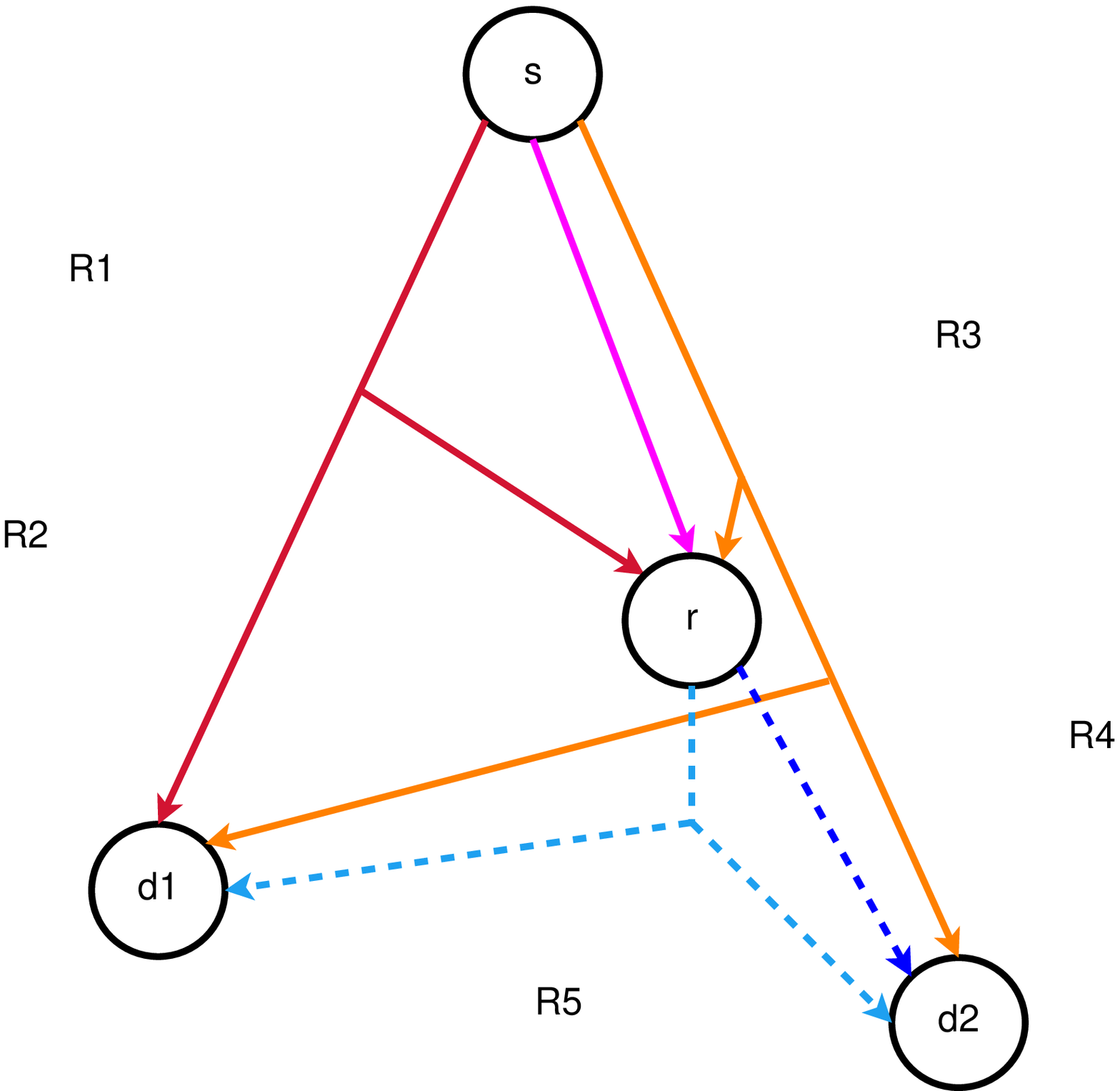}
\label{fig:BRC-hypergraph} } \label{fig:MUrateRegion}
\caption[Wideband MU Channels]{Wideband Multiple User Channels. The
BC rates are $R_1=(1-\alpha) h_1^2 \frac{P}{N_0}
\ind_{]h_2^2,+\infty[}(h_1^2)$, $R_2=(1-\alpha)  \frac{P}{N_0}
\ind_{[0,h_2^2[}(h_1^2)$, $R_c=\alpha \min\{h_1^2,h_2^2\}
\frac{P}{N_0}$. The BRC rates are $r_0=\frac{\alpha_0 P_s}{D_{sr}^2
N_0}$, $r_1= \frac{\alpha_1 P_s}{D_{s1}^2 N_0}$, $r_2=\frac{\alpha_2
P_s}{D_{s2}^2 N_0}$,$ r_3=\frac{\beta_1 P_r}{D_{r1}^2
N_0}$,$r_4=\frac{\beta_2 P_r}{D_{r2}^2 N_0}$}.
\end{center}
\vspace{-0.5cm}
\end{figure*}

Earlier results on multiple user channels show that BC and MAC
are not impaired by interference in the low SNR regime.

\subsubsection{Equivalent hypergraph of the wideband AWGN BC, \cite{Cover-1972,Cover-1998,ElGamal-Cover-1980,McEliece-Swanson-1987}}
Superposition coding is known to achieve the capacity region of the
AWGN BC. In the wideband limit, the rates achieved by superposition
coding boils down to the time-sharing rates, rendering time-sharing as optimal.

Consider the BC channel with source $s$, two destinations
$T=\{d_{1},d_{2}\}$ in Figure~\ref{fig:BC}, $D_{uv} > 0$ is the
distance of node $v$ from $u$, and let $\alpha \in [0,1]$ be the
power-sharing factor at source. Then both destinations can receive
the common rate $\alpha \min\{h_1^2,h_2^2\}\frac{P}{N_0}$, and the
most reliable destination can also receive a bonus private rate
$(1-\alpha) \max\{h_1^2,h_2^2\}\frac{P}{N_0}$, where $h$ is the path
loss factor, $N_{0}$ the channel noise and $P$ is the total source
power. This motivates the equivalent hypergraph model of the
wideband AWGN BC in Figure \ref{fig:BC-hypergraph}. The wideband BC
hypergraph model contains three hyperarcs: the common hyperarc from
the $s$ to $d_{1}$ and $d_{2}$ with capacity equal to the common
rate $R_{c}$, a private edge from $s$ to $d_1$ with capacity equal
to the private rate $R_{1}$ (if $d_1$ is more reliable than $d_2$,
i.e. $h_1^2 > h_2^2$, and to $0$ otherwise), and finally a private
edge $s$ to $d_2$ with a capacity equal to the bonus rate $R_{2}$ if $d_2$
is more reliable than $d_1$. Note that the two private hyperarcs
(associated with rates $R_{1}$ and $R_{2}$) cannot exist
simultaneously (as either $D_{s1}<D_{s2}$ or $D_{s2}<D_{s1}$):
thanks to the indicator functions in the capacity expressions, only
one of the private hyperarcs can be activated for a given topology.
In the general case of a wideband AWGN BC with $n$ destinations:
\begin{itemize}
\item For an arbitrary unknown topology, the full hypergraph model contains $2^n-1$ hyperarcs, from the source to every possible subset of destinations.
\item For a given known topology, only a subset of these hyperarcs are activated simultaneously.
Indeed, a given topology yields a given ordering of the destination
set in increasing order of reliability. Consequently, only $n$
hyperacs are simultaneously activated for a given topology:
one private arc of size $1$ to the most reliable destination i.e. $(s,d_{1})$, and
one common hyperarc of size $k$ to the $k$ most reliable
destinations for all $k\in\{2,...,n\}$ i.e. $(s,d_{1}..d_{k})$.
\end{itemize}

\subsubsection{Equivalent hypergraph of the wideband AWGN MAC \cite{Gallager-ITtrans1985}}
Consider two sources $s_{1}$ and $s_{2}$ and a single destination
$d$ in Figure~\ref{fig:MAC}. In the wideband regime, the large
number of degrees of freedom renders negligible interference, and
allows all sources to achieve their point-to-point capacity to the
destination, as with frequency division multiple access (FDMA). Thus, the
respective capacities of $s_{1}$ and $s_{2}$ are $C_1= h_1^2\frac{P_1}{N_0}$ and $C_2=
h_2^2\frac{P_2}{N_0}$. As shown in Figure~\ref{fig:MAC-hypergraph},
the equivalent hypergraph model contains only two edges, one from
$s_1$ to $d$ with capacity $C_1$ and one from $s_2$ to $d$ with
capacity $C_2$. In the general wideband MAC with $n$ sources, the
hypergraph model consists of $n$ hyperarcs of size $1$ with non-zero
capacity, from each source to the destination.

\subsection{Wideband BRC model}

Consider the broadcast relay channel in Figure $1$ (e), where a source s transmits to a set of $n$ destinations $T=\{d_i\}_{i\in \N}$ with the help of a relay $r$. We assume that all nodes are equipped with a single antenna. The
source and the relay have given respective average power constraints
$P_s$ and $P_r$, and they transmit in two different frequency bands,
$W_s$ and $W_r$ respectively, so as to respect the half-duplex
constraints at the relay. During each time slot, $s$ transmits a new
codeword which is received by the relay $r$ relay and $T$ (all
destinations); $r$ processes the signal received from $s$ in
the previous time slot and retransmits it to $T$; the destinations
use the signals they received directly from $s$ and through $r$ to
decode a new codeword.

The wireless link between two nodes $u \in\{s,r\}$ and
$v \in\{r,d_1,\ldots,d_n\}$ is modeled by an AWGN channel. In other words, when node $u$ transmits a
signal $x_u(t)$, node $v$ receives a signal $y_{vu}(t)=h_{vu}
x_u(t)+z_v(t)$ where $h_{vu}=\frac{1}{D_{vu}^ {\alpha/2}}$ is an
attenuation coefficient modeling pathloss, and  $z_v(t)$ is a white
Gaussian noise process with power spectral density $N_0$. Note that
although we consider AWGN channels, the low-SNR analysis could be
extended to multipath fading channels: indeed, it was shown in
\cite{Fawaz-Medard-ISIT2010} that in the wideband multipath fading
relay channel, the same rates can be achieved as in the wideband
AWGN relay channel with the same average received SNR on each link.

The AWGN BRC consists of two BC components in series for $s$ and $r$, the BC from
$s$ to $\{r,d_1,\ldots,d_n\}$ in red in Figure~\ref{fig:BRC}, and
the BC from $r$ to $\{d_1,\ldots,d_n\}$ in blue--- and of $n$
parallel MAC components, such as the MAC from $\{s,r\}$ to $d_1$
represented by the sum of the red and blue lines resulting into a green line.

As in \cite{Thakur-Medard-Globecom2010}, we make the assumption that
the source $s$ and the relay $r$ are constrained to transmit using
the scheme that would be optimal for their respective wideband
BC-components: $s$ transmits using superposition coding in the
source band $W_s$; $r$ decodes the messages it received from $s$,
and then retransmits using superposition coding in the relay band
$W_r$; each destination $d_i$ decodes by using the interference-free
signals it received from $s$ and $r$.

Under these constraints on the communication scheme, the resulting
hypergraph model of the BRC \cite{Thakur-Medard-Globecom2010} is
simply the concatenation of the equivalent hypergraphs of the
BC-components and the MAC components. We will denote this hypergraph
as $\mathcal{G(N,H)}$, where, $\mathcal{N} \supset
T=\{d_{1},..,d_{n}\}$ and $\mathcal{H}$ is the set of hyperarcs. The
set $\mathcal{H}$ can be partitioned into two disjoint sets: one is
$\mathcal{H}_s$ of source hyperarcs emanating from $s$, and the
set  $\mathcal{H}_r$ of relay hyperarcs emanating from the $r$,
where $\mathcal{H}_{s} \cup \mathcal{H}_{r}=\mathcal{H}$.
Figure~\ref{fig:BRC-hypergraph} illustrates the hypergraph in the
case of a given topology with two destinations. In this figure, we
assume that $r$ is the closest node to $s$, followed by $d_1$ and
then $d_2$, and we show only the activated hyperarcs.


It should be pointed out that this BRC hypergraph model is only an
achievable model, and not an equivalent model. Indeed, in the case
of a single destination, the BRC boils down to the relay channel,
and it was shown in \cite{Fawaz-Medard-ISIT2010} that with a
different coding scheme, it is possible to achieve a higher rate in
the wideband relay channel than any rate obtained by the scheme in
\cite{Thakur-Medard-Globecom2010}. Thus the BRC hypergraph model
proposed in this paper provides only an achievable rate region, but
not the full rate region of the BRC. However the relaying scheme in
\cite{Thakur-Medard-Globecom2010}, and the associated hypergraph
model, have the benefit to easily extend to large complex network.

\section{General problem structure}
Given a topology of the set of nodes $\mathcal{N} \backslash r$, and the aforementioned achievable hypergraph model of the wideband BRC, we recall the following question: \emph{What is the optimal relay position and power allocations at $s$ and $r$, that maximize the multicast rate $R_m$ from $s$ to the destination set $T$?} Here, the multicast rate is the rate experienced by the least reliable destination in the set $T$, and is given by its min-cut.
%
%
%
To solve the problem in full generality, we propose a two-stage approach, as follows:
\begin{itemize}

\item \underline{Pre-processing}: The pre-processing stage computes all distinct $\mathcal{H}_{s}$ and $\mathcal{H}_{r}$, respectively, for all positions of the relay inside the region being considered on the $2$-D plane, given by hyperarc sets $\mathcal{H}_{s}$ and $\mathcal{H}_{r}$, respectively. Since, only a subset of  these hyperarcs are active when the relay is in a certain region, we associate each hyperarc $(u,V)\in\mathcal{H}$, along with a continuous switch function $f_{uV}$. The switch function $f_{uV}$ activates the hyperarc $(u,V)$ by taking the value $1$ when it should exist and deactivates the hyperarc by taking value $0$ when it should not exist. In this section and section IV, we devise efficient algorithms to compute all the distinct hyperarcs. Once the hyperarc set $\mathcal{H}$ is constructed, we can then compute all the possible paths from the source to each destination. The total number of paths from $s$ to a destination $d_i \in T$ will be denoted $K_i$, and the rates on these paths will be written $\{r_1^{d_i}, ... , r_{K_i}^{d_i} \}$.
\vspace{1mm}

\item \underline{Optimization}: The second stage involves solving a network flow optimization problem. After obtaining $\mathcal{H}$, the multicast rate maximization problem for $\mathcal{G(N,H)}$ can be formulated as:
\vspace{-2mm}
    \begin{eqnarray}
    & \mbox{Program (A): } & \mbox{maximize }  \left( R_{m} \right)   \nonumber\\
    & \mbox{subject to: } &  R_{m} \leq r_{i} \mbox{, } \: \forall i \in \N_n,  \label{eq:ProgA-1}\\
    &  & r_{i} \leq  \sum_{k=1}^{K_i} r_{k}^{d_i}  \mbox{, }  \: \forall i \in \N_n, \label{eq:ProgA-2}\\
    &  &\hspace{-30mm} \max_{\substack{((i,k_{i})| k_{i} \in (u,V), \\ k_{i} \in [1,K_{i}])}} r^{d_{i}}_{k_{i}} \leq y_{uV} \mbox{, } \:  i\in \N_{n}, \forall (u,V) \in \mathcal{H}, \hspace{4mm} \label{eq:ProgA-3}\\
    &  & \hspace{-10mm} y_{uV} \leq c_{uV} f_{uV} \mbox{, } \: \forall (u,V) \in \mathcal{H}, \label{eq:ProgA-4}\\
    & \hspace{-10mm} \mbox{where,}& \hspace{-10mm} c_{uV} \in C_{uV}, \forall (u,V) \in \mathcal{H}. \label{eq:ProgA-5}
    \end{eqnarray}
    Here (\ref{eq:ProgA-1}) implies $R_{m}$ is the minimum among the total rates experienced at all destinations, (\ref{eq:ProgA-2}) says that the rate for destination $d_{i}$ is the sum of rates on all the paths from $s$ to $d_{i}$. (\ref{eq:ProgA-3}) captures the network coding constraint, and the switch function $f_{uV}$ in (\ref{eq:ProgA-4}) activates and deactivates hyperarc $(u,V)$ as the optimization algorithm goes from one relay position to the other to maximize  $R_{m}$. (\ref{eq:ProgA-5}) implies that the capacity of hyperarc $(u,V) \in \mathcal{H}$ is determined by implicit constraints of power and distance.

\end{itemize}
Note that, it is because of continuous switch functions that we have
a continuous optimization problem, else we would need combinatorial
constraints to capture the right hyperarcs that need to be activated for each relay position.

In the sequel of this section, we describe the two stages in detail. At first, we show that the optimal relay
position lies in the convex hull $\mathcal{C}$ of nodes $\mathcal{N} \backslash r = \{s, T\}$.

\subsection{Convex hull lemma}
\vspace{-2mm}
We prove hereunder an intuitive lemma which helps in formulating
the problem in a geometric sense.

\begin{figure}
   \begin{center}
   \psfrag{s}{$s$}
   \psfrag{r}{$r$}
   \psfrag{1}{$d_{1}$}
   \psfrag{2}{$d_{2}$}
   \psfrag{s1}{}
   \psfrag{s1'}{}
   \psfrag{s2}{}
   \psfrag{s2'}{}
   \psfrag{12}{}
   \psfrag{12'}{}
    \includegraphics[width=0.4\columnwidth]{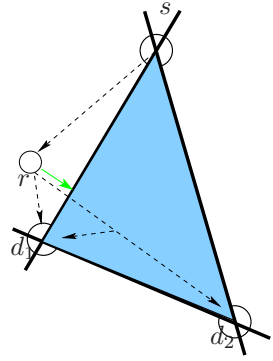}
    \vspace{-2mm}
\end{center}
\caption{{Three line line segments joining $(s,d_{1},d_{2})$ form the convex hull $\mathcal{C}$ (blue region). The hyperarcs $(s,r)$, $(r,d_{1})$ and $(r,d_{1}d_{2})$ are shown with dashed arrows. }} \label{fig:ConvexHull} \vspace{-6mm}
\end{figure}

\begin{lemma}\label{lem:ConvexHull}
Given $\mathcal{G(N,H)}$, the relay location that maximizes the
multicast rate $R_{m}$ from $s$ to $T$ lies inside the convex hull
of the nodes $\mathcal{N} \backslash r$.
\end{lemma}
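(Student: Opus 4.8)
The plan is to argue by contradiction: suppose the optimal relay position $r^{\star}$ lies strictly outside the convex hull $\mathcal{C}$ of $\mathcal{N}\backslash r = \{s,d_1,\ldots,d_n\}$, and show that we can strictly decrease (or at least not increase) all the relevant link distances by projecting $r^{\star}$ onto $\mathcal{C}$, thereby not decreasing the multicast rate. Concretely, let $r'$ be the orthogonal projection of $r^{\star}$ onto the closed convex set $\mathcal{C}$. The key geometric fact I would invoke is that projection onto a closed convex set is non-expansive and, more to the point, that for every point $x\in\mathcal{C}$ one has $\|r' - x\| \le \|r^{\star} - x\|$, with strict inequality whenever $r^{\star}\notin\mathcal{C}$ and $x$ is not on the supporting hyperplane through $r'$. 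Since $s$ and every $d_i$ lie in $\mathcal{C}$, this gives $D_{r's}\le D_{r^{\star}s}$, $D_{r'd_i}\le D_{r^{\star}d_i}$ and $D_{sd_i}$ unchanged for all $i$.

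Next I would translate this distance monotonicity into a statement about the hypergraph capacities. Every hyperarc capacity appearing in the BRC model is of the form (constant)$\times P/(D^2 N_0)$ for the relevant transmitter–receiver distance $D$ — see the rates $r_0,\ldots,r_4$ in Figure~1 and the general expressions $\frac{\alpha P_s}{D^2 N_0}$, $\frac{\beta P_r}{D^2 N_0}$. Hence each capacity is a non-increasing function of the corresponding distance. Moving the relay from $r^{\star}$ to $r'$ only shrinks distances $D_{rs}$ and $D_{rd_i}$, so every source hyperarc into the relay and every relay hyperarc out of the relay has capacity at least as large as before, while the purely source-to-destination hyperarcs (capacities depending only on $D_{sd_i}$) are untouched. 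One subtlety to address: the \emph{set} of activated hyperarcs $\mathcal{H}$ can change when the relay moves, because it depends on the distance ordering of the nodes from $s$ and from $r$. I would handle this by keeping the same power-sharing factors $(\alpha_j,\beta_j)$ and observing that the min-cut value from $s$ to the least reliable destination can only increase: any feasible flow/rate assignment achieving $R_m$ at $r^{\star}$ remains feasible at $r'$ after re-labelling hyperarcs according to the new ordering, since each individual link capacity used has not decreased. Equivalently, one can appeal to the max-flow min-cut structure of Program (A) and note that relaxing (enlarging) capacities can only raise the optimum $R_m$.

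Putting these together: $R_m(r') \ge R_m(r^{\star})$, and since $r'\in\mathcal{C}$ this contradicts the assumed strict optimality of an exterior point — or, in the non-strict reading, simply shows that an interior-or-boundary optimizer always exists, which is what the lemma claims. The main obstacle I anticipate is the bookkeeping around the changing hyperarc set $\mathcal{H}$ as the relay crosses Voronoi boundaries during the projection; one must make sure that "the same communication scheme, only with weakly larger link capacities" is genuinely feasible after the move, rather than merely plausible. A clean way to sidestep this is to phrase the comparison in terms of cuts: fix any cut separating $s$ from some $d_i$; its value is a sum of hyperarc capacities, each a non-increasing function of a distance that weakly decreased, so every cut value weakly increases, hence the min-cut — and thus $R_m$ — weakly increases. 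This cut-based argument is robust to the combinatorial change in $\mathcal{H}$ and is the route I would write up.
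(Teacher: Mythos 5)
Your proposal is correct and reaches the same conclusion by the same high-level strategy as the paper (reposition an exterior relay to a point of $\mathcal{C}$ and show no link gets worse), but the key step is genuinely different. The paper's proof is an informal triangle-inequality argument: it takes a two-destination example, considers the single path $\{(s,r),(r,d_1)\}$, invokes $D_{sr}+D_{rd_1}\geq D_{sd_1}$ to argue that moving $r$ toward the segment $[s,d_1]$ improves that path, and then asserts without proof that the same move helps all other receivers and that the argument generalizes to $|T|>2$. Your use of the orthogonal projection onto the closed convex set $\mathcal{C}$ is stronger and cleaner: the obtuse-angle property gives $\|r'-x\|^2\leq\|r^\star-x\|^2-\|r^\star-r'\|^2<\|r^\star-x\|^2$ for \emph{every} $x\in\mathcal{C}$ simultaneously (in fact strictly, for all $x\in\mathcal{C}$, not only off the supporting hyperplane), so all of $D_{rs},D_{rd_1},\ldots,D_{rd_n}$ shrink at once and the generalization to arbitrary $n$ is automatic rather than asserted. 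You also explicitly confront the one issue the paper silently skips, namely that $\mathcal{H}$ changes as $r$ crosses region boundaries. Your cut-based fix is the right instinct, but to make it airtight you should phrase the monotonicity in terms of order statistics rather than of individual hyperarcs: the capacity of the size-$k$ source (resp.\ relay) hyperarc is $P/(D_{(k)}^{\alpha}N_0)$ where $D_{(k)}$ is the $k$-th smallest distance from the transmitter to its receiver set, and since every relay-related distance weakly decreases, every order statistic weakly decreases, so every size-$k$ capacity weakly increases even though the identity of the $k$-th receiver may be permuted; one then checks that the min-over-destinations multicast min-cut is monotone under this. With that refinement your argument is a complete and more rigorous proof than the one in the paper.
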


\begin{proof}
We prove this lemma by building an intuitive argument. To start
with, consider the four node system of $\{s,r,d_{1},d_{2}\}$ in
Figure~\ref{fig:ConvexHull}. The only node that is allowed to take
its desired location is the relay $r$. The convex hull $\mathcal{C}$
of the node set $\mathcal{N} \backslash r =\{s,d_{1},d_{2}\}$ is
shown as the shaded area in Figure~\ref{fig:ConvexHull}. Consider the
arbitrary position outside $\mathcal{C}$ in the above scene e.g. $r$
is placed outside $\mathcal{C}$. The hyperarc sets for this position
of $r$ are given by $H_{s}=\{(s,r),(s,r,d_{1}),(s,r,d_{1},d_{2})\}$
and $H_{r}=\{(r,d_{1}),(r,d_{1},d_{2})\}$, where
$\mathcal{H}$, assuming that the ordered sets of nodes in increasing distances from $s$ and $r$ are given by $(s,r,d_{1},d_{2})$ and $(r,d_{1},d_{2})$, respectively.

Consider the rates on the path $\{(s,r),(r,d_{1})\}$ from $s$ to
$d_{1}$, which are given by:
\begin{equation}\label{eq:ConvexHullRates}
 R_{sr} \leq \frac{P_{sr}}{D_{sr}^{\alpha}N_{0}}, \hspace{2mm} R_{r1} \leq \frac{P_{r1}}{D_{r1}^{\alpha}N_{0}},
\end{equation}
where, $P_{s}$ and $P_{r}$ are source and relay powers. It is clear
that by moving $r$ towards the boundary of $\mathcal{C}$, i.e. line
segment joining $s$ and $d_{1}$, total rate on this path
($\min(R_{sr},R_{r1})$) could be increased. The triangle's
inequality corroborates this fact,

\begin{equation}\label{eq:trineq1}
D_{sr} + D_{rd_{1}} \geq D_{sd_{1}},
\end{equation}
\begin{equation}\label{eq:trineq2}
\begin{split}
(R_{sr} + R_{r1})&|_{\{D_{sr} + D_{rd_{1}} > D_{sd_{1}}\}} \\
& < (R_{sr} + R_{r1})|_{\{D_{sr} + D_{rd_{1}} = D_{sd_{1}}\}}.
\end{split}
\end{equation}
 (\ref{eq:trineq1}) is the triangles inequality for the triangle $\triangle_{srd_{1}}$ and this implies (\ref{eq:trineq2}), which states that the rate from $s$ to $d_{1}$ on the path $\{(s,r),(r,d_{1})\}$ could be increased by simply bringing the $r$ to towards the line segment joining $s$  and $d_{1}$.
 It is straightforward to see that this also increases the rate for all other receivers in system (consider triangles $\triangle_{srd_{i}}$).

This reasoning can easily be generalized to any arbitrary position of the relay outside the convex hull $\mathcal{C}$, and to any arbitrary number of destinations $|T| > 2$. Thus, we conclude that for any
given instance of BRC, the relay location maximizing the multicast rate lies inside or on the border of the convex
hull of $\{s + T\}$ nodes, but never outside. Hence, proved.
\end{proof}
\vspace{1mm}
Lemma~\ref{lem:ConvexHull} implies that we only need to
consider relay locations in the convex hull $\mathcal{C}$ of nodes $\mathcal{N} \backslash r$,
to maximize the multicast rate. There are efficient algorithms for
constructing a convex hull of $n+1$ points, c.f.
\cite{Preparata-book1985} and references within.

\subsection{Pre-processing algorithms}
\vspace{-1mm} The pre-processing stage consists of three sub-stages.
For all relay positions in $\mathcal{C}$, the first sub-stage
computes the source hyperarc set $\mathcal{H}_s$, the second
sub-stage computes the relay hyperarc set $\mathcal{H}_{r}$, and the
last sub-stage computes all the source-destination paths $\forall
d_{i} \in T$. Now, we develop algorithms to compute
$\mathcal{H}_{s}$ and $\mathcal{H}_{r}$ and give upper-bounds on the
number of distinct hyperarcs in $\mathcal{H}_{s}$ and
$\mathcal{H}_{r}$.

\begin{figure}[tp]
   \begin{center}
\psfrag{s}{$s$}
\psfrag{1}{$d_{1}$}
\psfrag{2}{$d_{2}$}
\psfrag{3}{$d_{3}$}
\psfrag{a}{(a)}
\psfrag{b}{(b)}
   \psfrag{hsr123}{$sr123$}
   \psfrag{hs1r23}{$s1r23$}
   \psfrag{hs12r3}{$s12r3$}
\psfrag{sr}{$(s,r),$}
\psfrag{sr1}{$(s,rd_{1}),$}
\psfrag{sr12}{$(s,rd_{1}d_{2}),$}
\psfrag{sr123}{$(s,rd_{1}d_{2}d_{3})$}
\psfrag{s1}{$(s,d_{1})$}
\psfrag{s1r}{$(s,d_{1}r)$}
\psfrag{s12}{$(s,d_{1}d_{2})$}
\psfrag{s12r}{$(s,d_{1}d_{2}r)$}
    \includegraphics[width=0.44\textwidth]{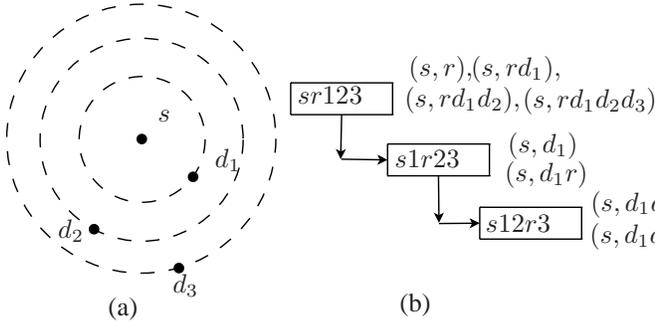}
\end{center}
\vspace{-4mm} \caption{(a): $T=\{d_{1},d_{2},d_{3}\}$ ordered set
$T$ with three concentric circles $c_{1}$ for each destination. (b): Shows the ordered set of nodes w.r.t. $r$ for the disc and two rings. Starting from the disc, the $\mathcal{H}_{s}$ is computed and then for each ring the two new hyperarcs are added.}
\label{fig:SourceHyperarcAlgo} \vspace{-6mm}
\end{figure}

\subsubsection{Source hyperarcs ($\mathcal{H}_{s}$)}\label{sec:SourceHyperarcs}

We first prove a lemma on $|\mathcal{H}_s|$.

\begin{lemma}\label{lem:SourceHyperarcs}
The number of distinct source hyperarcs inside the convex hull
$\mathcal{C}$ is upper bounded by $3n-1$, where $n=|T|$.
\end{lemma}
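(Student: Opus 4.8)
The plan is to count the distinct source hyperarcs that can be activated over \emph{all} relay positions inside $\mathcal{C}$, using the structure of the wideband BC hypergraph recalled earlier. Fix the source $s$ and recall that for any fixed topology, the source acts as a BC transmitter to the ordered set $\{r,d_1,\dots,d_n\}$, activating exactly one size-$1$ hyperarc to the closest receiver and one size-$k$ hyperarc to the $k$ closest receivers for each $k\in\{2,\dots,n+1\}$. Thus the only thing that varies with the relay position is \emph{where $r$ falls in the distance ordering from $s$}. I would first observe that the destinations $d_1,\dots,d_n$ have a fixed ordering by distance from $s$ (say $D_{sd_1}\le\cdots\le D_{sd_n}$, up to ties), so every activated source hyperarc is of the form $(s, d_1\cdots d_{j-1}\, r\, d_j\cdots d_{k-1})$ for some insertion position $j$ of $r$ and some size $k$, or a hyperarc not containing $r$ at all, namely $(s,d_1\cdots d_k)$.

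Next I would enumerate these families. The hyperarcs \emph{not} containing $r$ are $(s,d_1),(s,d_1d_2),\dots,(s,d_1\cdots d_n)$: that is $n$ of them, and they are the same regardless of relay position. The hyperarcs \emph{containing} $r$: if $r$ is inserted at position $p$ in the ordering (i.e. $p-1$ destinations are closer to $s$ than $r$, with $p\in\{1,\dots,n+1\}$), then the activated hyperarcs containing $r$ have the prefix $d_1\cdots d_{p-1}r$ and then extend to include $d_p,\dots$ up to some point — but note that once $r$ is in the receiver set, the hyperarcs of size $\ge p$ containing $r$ are $(s,d_1\cdots d_{p-1}r)$, $(s,d_1\cdots d_{p-1}r d_p)$, \dots, $(s,d_1\cdots d_{p-1}r d_p\cdots d_n)$. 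The key reduction is that many of these coincide across different values of $p$: the hyperarc $(s,d_1\cdots d_{p-1}r d_p\cdots d_{k-1})$ is distinct for each (prefix-length $p-1$, total size) pair, but as $p$ ranges over $\{1,\dots,n+1\}$, the set of distinct hyperarcs of the form ``$r$ inserted somewhere into the fixed destination ordering, then truncated'' is governed by: for each of the $n+1$ insertion slots, which new hyperarcs appear that were not already counted. I would argue that moving $r$ from slot $p$ to slot $p+1$ changes only a bounded number of the activated hyperarcs — concretely, the hyperarcs $(s,d_1\cdots d_{p-1} r d_p \cdots)$ get replaced, but the picture in Figure~\ref{fig:SourceHyperarcAlgo} suggests the clean bookkeeping: start with $r$ in the innermost position (the disc), giving the $n+1$ hyperarcs $(s,r),(s,rd_1),\dots,(s,rd_1\cdots d_n)$, and then for each of the $n$ successive ``rings'' crossed by $r$ (i.e. each time $r$ moves past one more destination), exactly $2$ genuinely new hyperarcs are created (the new prefix $(s,d_1\cdots d_j r)$ and one more extension), while the size-$1$-to-$d_1$ arc stabilizes. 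This gives $(n+1) + 2(n-1) = 3n-1$.

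The step I expect to be the main obstacle is making the ``exactly two new hyperarcs per ring'' claim precise and showing there is no double counting — i.e., carefully verifying that when $r$ crosses from being closer to $s$ than $d_j$ to being farther, the set of activated source hyperarcs changes by removing some arcs and adding exactly two that have never appeared for any earlier relay position, and that the total over all $n+1$ slots telescopes to $3n-1$ rather than something larger like $O(n^2)$. The crux is that the \emph{destination} suffix-structure is frozen, so a hyperarc containing $r$ is determined by the pair (number of destinations before $r$, number of destinations after $r$ included in it), and the activated ones always form the ``staircase'' described above; counting lattice points on that staircase as the split point moves gives a linear, not quadratic, count. Once this combinatorial bookkeeping is nailed down, adding the $n$ relay-free hyperarcs yields the bound $|\mathcal{H}_s|\le 3n-1$, and I would also note that this bound is tight for generic collinear-type configurations where all $n+1$ insertion slots are realized by some relay position in $\mathcal{C}$.
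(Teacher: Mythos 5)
Your argument is essentially the paper's own proof: partition $\mathcal{C}$ by the concentric circles centered at $s$ through each $d_i$, count the $n+1$ hyperarcs activated when $r$ lies in the innermost disc, and observe that each of the $n-1$ ring crossings transposes $r$ with exactly one destination in the distance ordering from $s$ and hence contributes exactly two genuinely new hyperarcs, giving $(n+1)+2(n-1)=3n-1$. One caution on your closing bookkeeping: ``adding the $n$ relay-free hyperarcs'' at the end would double-count, since the staircase total $(n+1)+2(n-1)$ already includes the relay-free prefixes $(s,d_1\cdots d_j)$ among the two arcs added per ring (and $(s,d_1\cdots d_n)$ is in fact never activated for $r\in\mathcal{C}$, because $\mathcal{C}$ is contained in the outermost circle, so $r$ cannot be strictly farther from $s$ than every destination).
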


\begin{proof}
Consider a BRC with $s$ and an ordered set of three destinations
$T=\{d_{1},d_{2},d_{3}\}$ ($|T|=n=3$). Let $c_i$ be the circle
centered at $s$ passing though $d_i$. An example is
illustrated in Figure~\ref{fig:SourceHyperarcAlgo}(a), the three
circles partition the $2$-D plane into rings and discs that are given by disc $C_{1}$ and two
concentric rings $\mathcal{R}_{21}$ and $\mathcal{R}_{32}$.

For the positions of $r$ inside these areas there are distinct sets
of source hyperarcs. Computing the source hyperarcs when $r$ is inside these regions and
starting with the disc $c_{1}$, we simply get a set of $4$ hyperarcs
$H_{s}=\{(s,r),(s,rd_{1}),(s,rd_{1}d_{2}),(s,rd_{1}d_{2}d_{3})\}$.
Each time $r$ crosses the border of circle $c_{i}$ and enters the ring
$\mathcal{R}_{ii-1}$, there are two hyperarcs that change and hence the
new hyperarcs must be added, i.e. $\{(s,d_{1}),(s,d_{1}r)\}$ for
$\mathcal{R}_{21}$ and $\{(s,d_{1}d_{2}),(s,d_{1}d_{2}r)\}$ for
$\mathcal{R}_{32}$, (refer Figure~\ref{fig:SourceHyperarcAlgo}(b)). This is due to the fact that, when $r$ enters a new region, the ordered set of nodes in increasing distances from $s$ of the new region is different from the previous region in only two places, e.g. for the disc $C_{1}$ and ring $\mathcal{R}_{21}$ the ordered sets are given by $(s,r,d_{1},d_{2},d_{3})$ and $(s,d_{1},r,d_{2},d_{3})$, respectively.

Thus, the maximum number of distinct source hyperarcs that can exist for all relay positions in
$\mathcal{C}$ is given by $(n+1)+2(n-1)=3n-1$. Hence, proved.
\end{proof}

\vspace{1mm}
At this point we would like to highlight a couple subtleties:
\begin{enumerate}[(a)]
    \item If some $d_{i} \in T$ are equidistant from $s$, then their respective circles coincide, hence reducing the number of disjoint rings, and the number of distinct hyperarcs becomes less than $3n-1$. Thus, Lemma~\ref{lem:SourceHyperarcs} is an upper bound.
    \item The simple algorithm outlined in \emph{Lemma~\ref{lem:SourceHyperarcs}} builds all the possible hyperarcs efficiently in the sense that only distinct hyperarcs are added to $\mathcal{H}_{s}$  along with their respective switch functions, thus avoiding any redundancy.
\end{enumerate}

\underline{\emph{Switch Function:}} The activation/deactivation of a
hyperarc can be performed by the switch function. For instance, the
switch function associated with hyperarc $(s,rd_{1})$ (Figure~\ref{fig:SourceHyperarcAlgo}) is
$f_{sr1}=(1+\gamma e^{(- \gamma z_{sr1})})^{-1}$, where
$z_{sr1}=D_{s1}-D_{sr}$ and $\gamma >> 1$. When $r$ is in $C_1$,
$z_{sr1}$ is positive, and $f_{sr1}\simeq 1$, thus hyperarc
$(s,rd_{1})$ is active. Similarly, when $r$ is outside $C_1$,
$z_{sr1}$ is negative, and $f_{sr1}\simeq 0$, thus hyperarc
$(s,rd_{1})$ is deactivated. Notice, that for the hyperarcs inside a
ring (e.g. $\mathcal{R}_{ii-1}$), the switch function will be a product
of two functions, each for the regions of concentric circles that make this
ring.

\begin{figure}
\begin{center}
\psfrag{s}[cc][cc]{{\tiny $s$}} \psfrag{d1}[cc][cc]{{\tiny $d_1$}}
\psfrag{d2}[cc][cc]{{\tiny $d_2$}} \psfrag{d3}[cc][cc]{{\tiny
$d_3$}} \psfrag{d4}[cc][cc]{{\tiny $d_4$}}
\psfrag{(142)}[cc][cc]{{\tiny $(1423)$}}
\psfrag{(124)}[cc][cc]{{\tiny $(1243)$}}
\psfrag{(214)}[cc][cc]{{\tiny $(2143)$}}
\psfrag{(213)}[cc][cc]{{\tiny $(2134)$}}
\psfrag{(231)}[cc][cc]{{\tiny $(2314)$}}
\psfrag{(234)}[cc][cc]{{\tiny $(2341)$}}
\psfrag{(324)}[cc][cc]{{\tiny $(3241)$}}
\psfrag{(342)}[cc][cc]{{\tiny $(3421)$}}
\psfrag{(432)}[cc][cc]{{\tiny $(4321)$}}
\psfrag{(431)}[cc][cc]{{\tiny $(4312)$}}
\psfrag{(413)}[cc][cc]{{\tiny $(4132)$}}
\psfrag{(412)}[cc][cc]{{\tiny $(4123)$}}
\psfrag{(421)}[cc][cc]{{\tiny $(4213)$}}
\psfrag{(241)}[cc][cc]{{\tiny $(2413)$}}
\psfrag{(243)}[cc][cc]{{\tiny $(2431)$}}
\psfrag{(423)}[cc][cc]{{\tiny $(4231)$}}
\includegraphics[width=0.98\columnwidth]{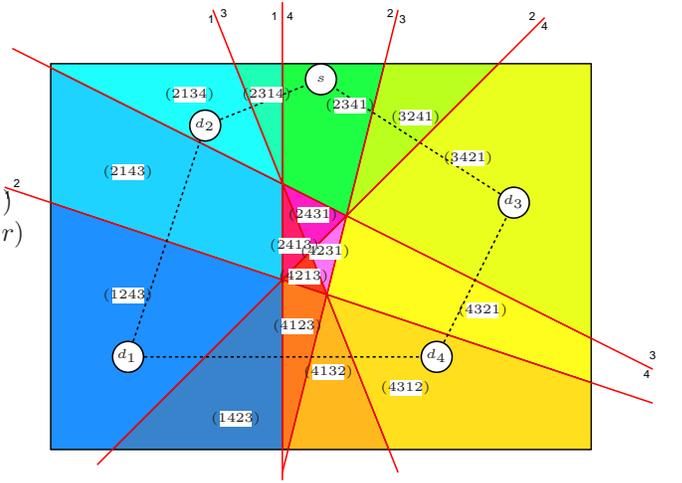}
\end{center}
\vspace{-4mm}
\caption{{The 4 destinations are shown with the cells showing
ordered pairs (in increasing distance) in the plane after
superimposing $1$st,  $2$nd and $3$rd order Voronoi diagrams. The dashed polygon forms $\mathcal{C}$.}}
\label{fig:Voronoi}
\vspace{-6mm}
\end{figure}

\subsubsection{Relay hyperarcs ($\mathcal{H}_{r}$)}\label{sec:RelayHyperarcs}

The relay hyperarcs are determined by
the ordering of node set $T$ with increasing distance from $r$. Thus, we need to partition the 2-D plane into
disjoint regions where the ordering of $T$ with respect to $r$ stays
the same. This is equivalent to computing the order-$k$ Voronoi
tessellations of $\mathcal{C}$ for all $k \in \N_{n-1}$,
and then superimposing them to get the ordered set $T$ of nearest
neighbors in $\mathcal{C}$, \cite{Edelsbrunner-book1987}.  It is
known that the superimposition of Voronoi tessellations results in
convex disjoint areas (polygons in our case). Once the ordered set $T$ for each disjoint region is computed, the
$n$ activated relay hyperarcs for each region are obtained the same way as in
Section~\ref{sec:BC-MAC} for the case of a BC with a given known
topology. Thus, the algorithm outlined in Lemma 1 could be used to generate $\mathcal{H}_{r}$. Figure~\ref{fig:Voronoi}, illustrates the superimposed
disjoint regions of ordered destinations with respect to $r$ for $n=4$. The simplest
way to compute these regions (ref. \cite{Lee-1982}) is to draw the
perpendicular bisector of every destination pair $(d_i,d_j)$ in $T^{2}$.
This method has the complexity of order $O(n(n-k))$, where $k \in
\N_{n-1}$. Hence, we obtain the partitions of $\mathcal{C}$ for
distinct ordering of the set $T$ with respect to $r$ from which we can generate $\mathcal{H}_{r}$.

The pre-processing in almost all network planning problems is
computationally heavy as there is plenty of time up-front compared to
real-time applications. In our case it fits better as all the
computations are of polynomial order.

In the next sections, we formulate the problem of optimal relay
positioning as a non-convex network flow optimization problem and
provide a good convex approximation. For simplicity and clarity in understanding, we divide the
problem into two cases. The first case is for collinearly located
destination nodes and the second case is for arbitrarily
located destinations.

\section{COLLINEAR CASE} \label{sec:Collinear}
In this section we develop the method to solve the
simpler version of the problem where the destination nodes are
collinearly located. It helps understanding the main concepts and underlying algorithms, and ultimately leads to the solution for the arbitrary case.

\subsection{Pre-processing}

\subsubsection{Convex hull $\mathcal{C}$}

For the collinearly located destination set $T$, the set $T$ could
be ordered in the increasing order of abscissa, for instance with the left
most node being at the origin. Hereafter, we assume that the
left-most (respectively rightmost) node $d_l \in T$ ($d_r \in T$) is situated
at the origin (respectively right most at horizontal axis), and the
rest of the destination nodes are on the positive $x$-axis, and
finally that the source is in the positive quadrant. Note that, $s$
could be the leftmost node compared to any $d_{i} \in T$, in this
case $s$ could be assumed to be on the positive vertical axis and the
set $T$ would accordingly be placed on positive horizontal axis with $d_{l}$ not being at origin.
Since all $d_{i} \in T$ are collinearly located, the convex hull
will always be the triangle $\triangle_{sd_{l}d_{r}}$ (ref.
Figure~\ref{fig:edge-a}). Thus, $\mathcal{C}$ is always given by only three
inequalities in this case.

\subsubsection{Source hyperarcs}

As explained in Section~\ref{sec:SourceHyperarcs}, the source
hyperarcs are functions of the source-destination distances.
Consequently, the algorithm outlined in
Lemma~\ref{lem:SourceHyperarcs} could be used to compute
$\mathcal{H}_{s}$.

\subsubsection{Relay hyperarcs}
Also, as shown in Section~\ref{sec:RelayHyperarcs}, we compute the
perpendicular bisectors of every destination pair $(d_i,d_j) \in T^{2}$,
to compute the superimposed convex disjoint regions of
ordered destination sets w.r.t. $r$ (ref. Figure~\ref{fig:edge-b}).
For the collinearly located nodes, the computation of the set
$\mathcal{H}_{r}$ is greatly simplified due to parallelism of all bisectors. The
following Lemma is just an easy and straightforward formalization.

\begin{lemma}\label{lem:RelayHyperarcs}
For $|T|=n$ collinear destinations, the total number of distinct
relay hyperarcs in $\mathcal{C}$ is upper bounded by
$n^{2}=n+2(\beta-1)$, where $\beta=\binom{n}{2} +1$ is the number of
bisected regions.
\end{lemma}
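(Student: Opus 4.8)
The plan is to adapt the region-sweep argument of Lemma~\ref{lem:SourceHyperarcs} to the relay hyperarcs, using that collinearity makes all pairwise perpendicular bisectors parallel. First I would fix the line $\ell$ carrying the destinations and note that the perpendicular bisector of any pair $(d_i,d_j)\in T^2$ is orthogonal to $\ell$; hence the at most $\binom{n}{2}$ distinct bisectors are mutually parallel, and they cut the plane --- and in particular the triangle $\mathcal{C}=\triangle_{sd_{l}d_{r}}$ --- into at most $\binom{n}{2}+1=\beta$ convex slabs, which inherit a total order $\mathcal{R}_1,\dots,\mathcal{R}_{\beta}$ along $\ell$. Inside a fixed slab no bisector is crossed, so the ordering of $T$ by distance from the relay $r$ is constant, and by the wideband-BC-with-known-topology structure recalled in Section~\ref{sec:BC-MAC} this ordering activates exactly $n$ relay hyperarcs: the private arc to the closest destination, together with the $n-1$ nested common arcs.

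Next I would sweep the slabs in order. Slab $\mathcal{R}_1$ contributes its $n$ activated hyperarcs to $\mathcal{H}_r$. Passing from $\mathcal{R}_t$ to $\mathcal{R}_{t+1}$ crosses a single bisector, that of some pair $d_id_j$; on it $D_{ri}=D_{rj}$, while --- in general position --- every other destination sits at a strictly different distance and therefore keeps its distance-rank throughout a neighbourhood of the crossing. Thus the distance order of $T$ changes only by transposing the two consecutive entries $d_i,d_j$, so at most two of the $n$ activated hyperarcs are altered. Adding up the at most two new hyperarcs over the $\beta-1$ slab transitions gives $|\mathcal{H}_r|\le n+2(\beta-1)$, and substituting $\beta=\binom{n}{2}+1$ yields $|\mathcal{H}_r|\le n+n(n-1)=n^2$. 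The same incremental bookkeeping --- start in $\mathcal{R}_1$, and at each new slab append only the (at most two) newly appearing arcs with their switch functions --- simultaneously constructs $\mathcal{H}_r$ without redundancy, exactly as in Lemma~\ref{lem:SourceHyperarcs}.

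Finally I would record the two reasons this is only an upper bound: if several midpoints of the pairs $(d_i,d_j)$ coincide, the number of distinct bisectors and hence $\beta$ drops; and destinations that are equidistant from some relay position collapse slabs further --- the analogue of remark~(a) after Lemma~\ref{lem:SourceHyperarcs}. The step I expect to require the most care is the claim that at most two hyperarcs change per bisector crossing: it rests on a genericity-plus-continuity argument that one crossing realizes a single adjacent transposition of the distance order. Collinearity is exactly what keeps this clean, since it forces the bisectors to be parallel --- so the crossings are isolated single-bisector events and the slabs form a chain --- rather than producing a general planar arrangement whose cells could abut many bisectors at once.
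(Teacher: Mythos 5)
Your proposal is correct and follows essentially the same route as the paper's own proof: count the at most $\binom{n}{2}$ parallel perpendicular bisectors, hence $\beta=\binom{n}{2}+1$ regions, charge $n$ hyperarcs to the first region and at most two new hyperarcs to each of the $\beta-1$ bisector crossings, giving $n+2(\beta-1)=n^2$. Your write-up is in fact somewhat more careful than the paper's --- making explicit the parallelism from collinearity, the adjacent-transposition argument at each crossing, and the degenerate cases that make the bound non-tight --- but the underlying argument is the same.
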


\begin{proof}
Given a set $T$ of n colinear destinations, the maximum number of bisectors
are given by $\binom{n}{2}$. Then the total number of bisected
regions is given by $\beta=\binom{n}{2}+1$, as shown in
Figure~\ref{fig:edge-b}. Since, crossing each bisector only changes
two nodes in the ordered destination set, using the algorithm
outlined in \emph{Lemma~\ref{lem:SourceHyperarcs}} (ref.
Figure~\ref{fig:SourceHyperarcAlgo}), we can compute all distinct
relay hyperarcs to $T$, $|\mathcal{H}_{r}| = n+2(\beta-1)=n^{2}$.
Hence, proved.
\end{proof}

With only little more formalization of Lemma~\ref{lem:RelayHyperarcs}, we
can device easy algorithms for computing the set $\mathcal{H}_{r}$.
Note, that the switch function for each relay hyperarc can be
computed in a similar manner as for the source hyperarcs. At the
same time, a particular switch function could constitute two
sub-switch functions each for two perpendicular bisectors.

%
%
%

\begin{figure}[htp]
  \begin{center}
\psfrag{s}{$s$}
   \psfrag{1}{$d_{1}$}
   \psfrag{2}{$d_{2}$}
   \psfrag{3}{$d_{3}$}
   \psfrag{c1}{$C_{1}$}
   \psfrag{c2}{$\mathcal{R}_{21}$}
\psfrag{c4}{$\mathcal{R}_{4 \infty}$} \psfrag{c3}{$\mathcal{R}_{32}$}
\psfrag{a1}{($r,d_{1},d_{2},d_{3}$)}
\psfrag{a2}{($d_{1},r,d_{2},d_{3}$)}
\psfrag{a3}{($d_{1},d_{2},r,d_{3}$)} \psfrag{e}{($C_{1},r_{2}$)}
\psfrag{f}{($C_{1},r_{1}$)} \psfrag{g}{($\mathcal{R}_{21},r_{1}$)}
\psfrag{h}{($\mathcal{R}_{21},r_{2}$)} \psfrag{i}{($\mathcal{R}_{21},r_{3}$)}
\psfrag{j}{($\mathcal{R}_{32},r_{4}$)} \psfrag{k}{($\mathcal{R}_{21},r_{4}$)}
\psfrag{l}{($C_{1},r_{4}$)} \psfrag{m}{($C_{1},r_{3}$)}
   \psfrag{a}{($d_{2},d_{1},d_{3}$)}
   \psfrag{b}{($d_{1},d_{2},d_{3}$)}
\psfrag{c}{($d_{1},d_{3},d_{2}$)} \psfrag{d}{($d_{3},d_{1},d_{2}$)}
\subfigure[Source
hypergraph.]{\label{fig:edge-a}\includegraphics[scale=0.24]{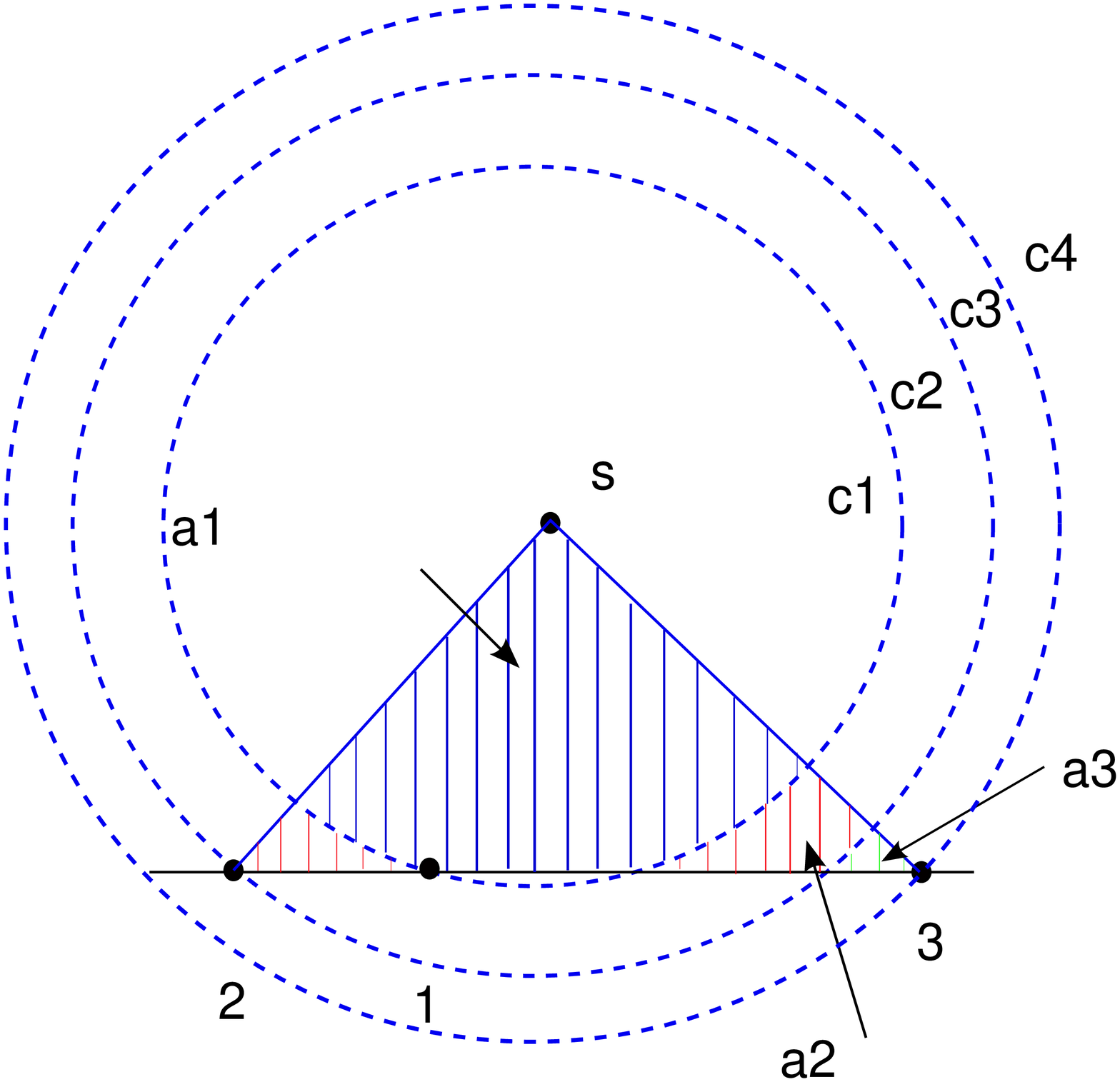}}
    \subfigure[Relay hyperarcs.]{\label{fig:edge-b}\includegraphics[scale=0.29]{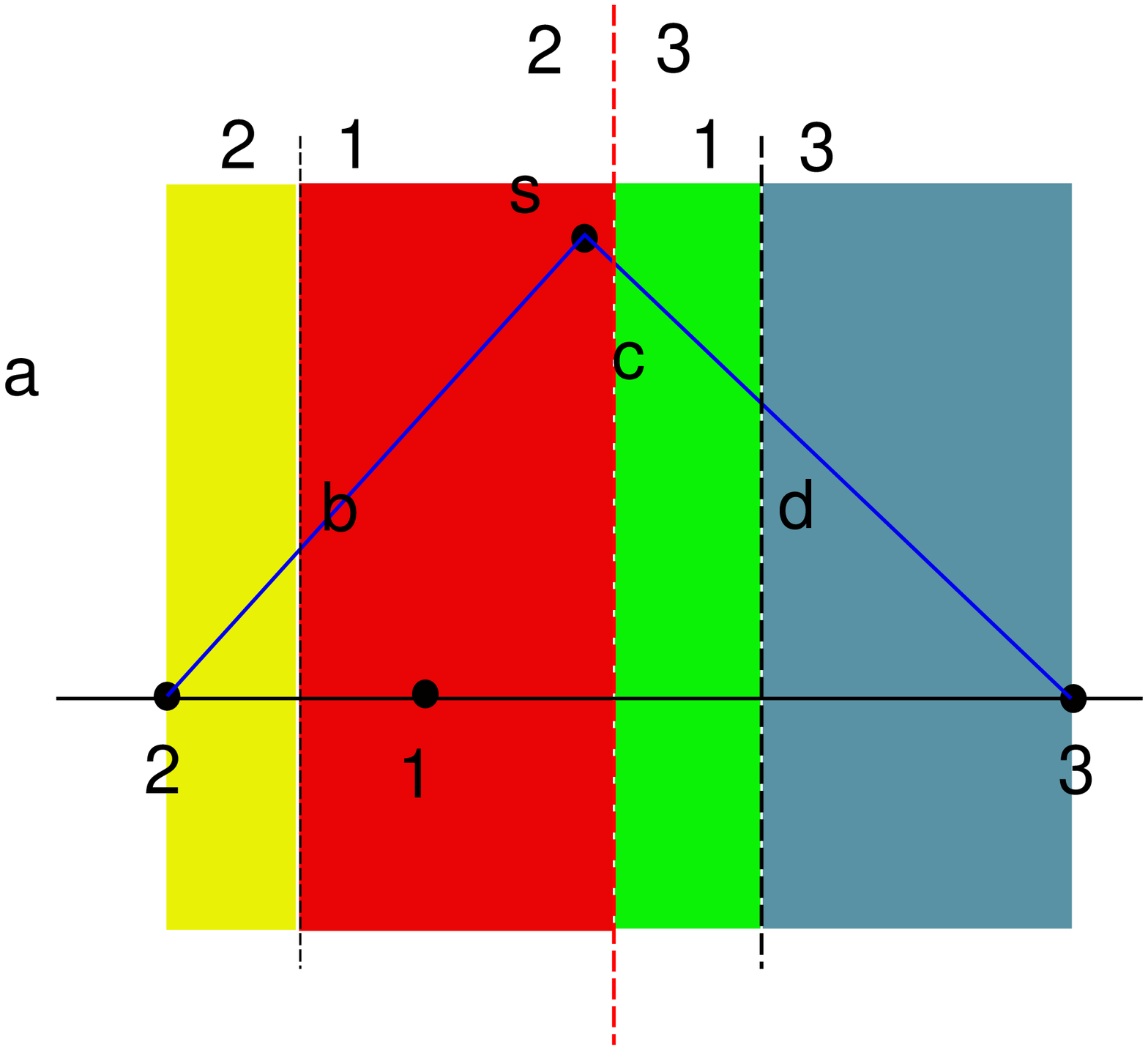}} \\
    \subfigure[Source-destination paths.]{\label{fig:edge-c}\includegraphics[scale=0.29]{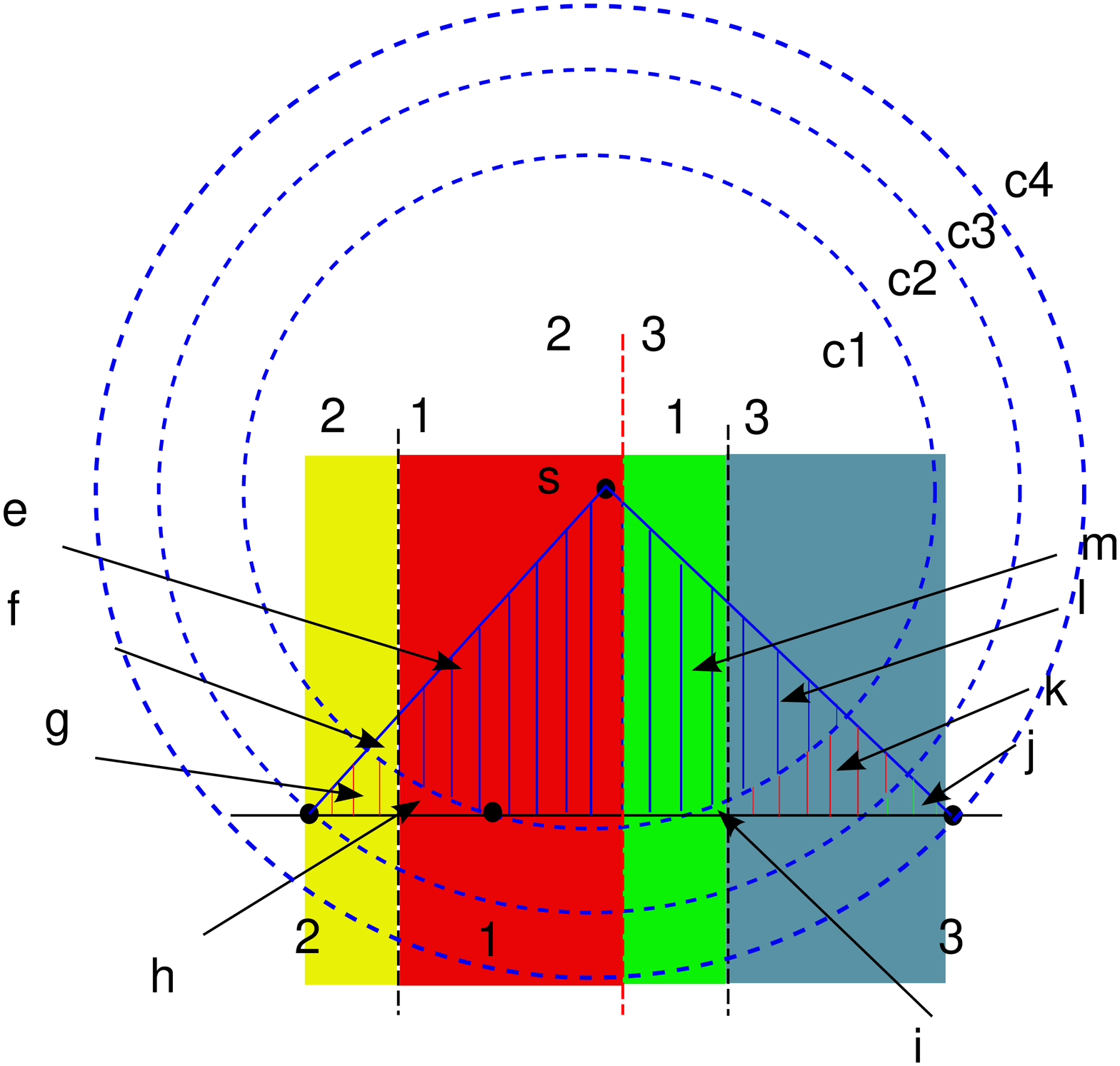}}
  \end{center}
\vspace{-2mm}
  \caption{Pre-processing. The triangle $\triangle_{sd_{2}d_{3}}$ shows $\mathcal{C}$,
  with circles and perpendicular bisectors dividing $\mathcal{C}$ is closed and disjoint sets.
  (a): Shows ordered $4$-tuple set for each region carved by the circles $c_{1},c_{2}$ and $c_{3}$ as $C_{1},\mathcal{R}_{21}, \mathcal{R}_{32}$ and $\mathcal{R}_{4 \infty}$, respectively.
  (b): Shows the ordered $3$-tuple sets of destination nodes with respect to $r$. (c): Shows the previous $2$ figures superimposed
showing the disjoint convex regions with the ordered sets of closest
nodes with respect to $s$ and $r$, respectively. The 2-tuple $(C,r)$
represents the ordered sets for each region, respectively. Here,
$r_{1},r_{2},r_{3}$ and $r_{4}$ represents the four regions in (b).}
  \label{Pre-processing.}
\vspace{-2mm}
\end{figure}

\vspace{1mm}
\subsubsection{Source-destination paths}
After successfully computing all distinct source ($\mathcal{H}_{s}$)
and relay ($\mathcal{H}_{r}$) hyperarcs for a given system, we now
need to compute all the paths from $s$ to all destinations $d_{i}
\in T$, in order to successfully cast our problem as a network
optimization program. We prefer a path based formulation as opposed
to a more basic and standard link based formulation because the path
based formulation is far well suited for convex approximations of
originally non-convex network optimization programs, in our
framework.

There are many efficient ways (polynomial time algorithms) to
compute the paths from the set $H_{s}$ and $H_{r}$. For simplicity,
we prefer here to take all combinations of the hyperarcs in $H_{s}$
and $H_{r}$. Not all these paths will be active for a certain relay
position, but the switch functions will take care of
activating/deactivating the paths. In this way, we get an upper
bound $(3n-1)(n^{2}) = |\mathcal{H}_{s}| \times |\mathcal{H}_{r}|$
on the paths from $s$ to each $d_i \in T$, out of which only a
certain number of paths will have non-zero min-cut, i.e. activated
hyperarcs. We define,  $\Omega=[1,(3n-1)(n^{2})]$ as the set of all
paths from $s$ to every $d_i \in T$. This makes the problem size
bigger, but saves cost of activated path computation. Note that the
total number of paths are polynomially bounded in our model.

\vspace{-1mm}
\subsection{Optimization.}
\vspace{-2mm} In this section we formalize the problem of optimal
relay location maximizing the multicast rate from $s$ to $T$. The
optimization constraints can be grouped into two categories: the
posynomial constraints that can be easily rewritten using
exponential transformation as convex constraints (Geometric
Programming); and the non-posynomial constraints that can only be
approximated as convex constraints. Since, all the constraints are
coupled through variables, almost all the variables and hence
constraints will go through the exponential transformation. Now, we
classify and discuss the troubling constraints in our formulation.

\subsubsection{Hyperarc rate constraints (Posynomials)}
We show an example of hyperarc rate constraints. Consider the
hyperarc $(s,d_{1}rd_{2})$ in the scenario in
Figure~\ref{fig:edge-a}, which is active when the relay is inside
the ring $\mathcal{R}_{21}$. The non-convex rate inequality can be
expressed as:\vspace{-2mm}
\begin{equation}
 y_{s,1r2} \leq \frac{P_{s1r2}}{D_{s2}^{\alpha}N_{0}} f_{s1r2}
\vspace{-1mm}
\end{equation}
where $f_{s1r2} =f^{1}_{s1r2}f^{2}_{s1r2}$, $f^{1}_{s1r2} \leq
\left(1+\gamma e^{(-\gamma z^{1}_{s1r2})} \right)^{-1}$,
$f^{2}_{s1r2} \leq \left(1+\gamma e^{(-\gamma z^{2}_{s1r2})}
\right)^{-1}$, $D_{sr}-D_{s1}=z^{1}_{s1r2}$ and
$z^{2}_{s1r2}=D_{s2}-D_{sr}$. Notice, how $f^{1}_{s1r2}$ and
$f^{2}_{s1r2}$ and their respective $z$ variables are different.
When, $f^{1}_{s1r2}$ and $f^{2}_{s1r2}$ take the value as $1$, the
hyperarc will have a non-zero min-cut. Rewriting them together,
\begin{align}
\frac{R_{s1r2}D_{s2}^{\alpha}N_{0}}{P_{s1r2}f_{s1r2}} \leq  1, \hspace{2mm} \frac{R_{s1r2}D_{s2}^{\alpha}N_{0}}{P_{s1r2}f^{1}_{s1r2}f^{2}_{s1r2}} \leq  1, \label{eq:HyperarcCons1}\\
f^{1}_{s1r2}\left(1+\gamma e^{(-\gamma z^{1}_{s1r2})} \right)  \leq  1, \label{eq:HyperarcCons2}\\
f^{2}_{s1r2}\left(1+\gamma e^{(-\gamma z^{2}_{s1r2})} \right) \leq 1, \label{eq:HyperarcCons3}\\
D_{sr}-D_{s1} \leq z^{1}_{s1r2}, \hspace{2mm} z^{2}_{s1r2} \leq
D_{s2}-D_{sr}. \label{eq:HyperarcCons4}
\end{align}
Note,  that inequalities (\ref{eq:HyperarcCons1}),
(\ref{eq:HyperarcCons2}), (\ref{eq:HyperarcCons3}) and
(\ref{eq:HyperarcCons4}) are posynomials, and $D_{s1}$ is a
constant. Using GP transformation these inequalities can be easily
converted to convex constraints. Similar argument goes for switch
functions of all other hyperarcs.

\subsubsection{Distance function constraints (Non-posynomials)} Variables $D_{uv}$ in rate
inequalities represents distance functions, given by e.g.,
\vspace{-4mm}
\begin{align}
\sqrt{(x_{r}-x_{s})^{2}+(y_{r}-y_{s})^{2}} = D_{sr}, \label{eq:DistanceCons1}
\end{align}
where, $(x_{s},y_{s})$ are fixed coordinates of $s$ and
$(x_{r},y_{r})$ are the variable coordinates of $r$. The negative
coefficients in (\ref{eq:DistanceCons1}) prohibits the use of GP
techniques. There are techniques to get around this problem
\cite{Lundell-2009}, \cite{Jung-Fa-2009}, but the extra
pre-processing cost incurred is very high in addition to the
introduction of many new variables and combinatorial constraints.

We prefer to handle the issue in a simpler manner by approximation.
Let, the only variable transformed using GP in (\ref{eq:DistanceCons1}) be $D_{sr}$. Then, we can rewrite
\begin{eqnarray}
u^{2}_{sr} + v^{2}_{sr}  \leq  e^{2D'_{sr}}, \hspace{1mm}
x_{r}-x_{s} \leq u_{sr}, \hspace{1mm}y_{r}-y_{s} \leq v_{sr}.
\label{eq:DistanceCons2}
\end{eqnarray}
The first inequality in (\ref{eq:DistanceCons2}) is non-convex.
Using the $p$-norm surrogate approximation (\cite{Li-1999}) for
(\ref{eq:DistanceCons2}), we get \vspace{-2mm}
\begin{equation}
\bigg( \frac{u^{2}_{sr} + v^{2}_{sr}}{e^{2D'_{sr}}} \bigg)^{p} \leq 1, \label{eq:DistanceCons3}
\end{equation}
where $ p\in [1,+\infty)$. Over a compact set of variables and in
the limit of $p \rightarrow \infty$, (\ref{eq:DistanceCons3})
becomes convex. In our case, for values of $p=4$ or $5$, we get good
approximation. Note, that only the first inequality in
(\ref{eq:DistanceCons2}) needs to be approximated, and since the
variables $(u_{sr},v_{sr},x_{r},y_{r})$ don't undergo GP
transformation, the rest of the inequalities in
(\ref{eq:DistanceCons2}) remain linear.

All other constraints in the program are posynomials, as we will
shortly see, so they can easily be transformed into convex
constraints. It should be noted, that it is only because of the use
of switch functions that the program becomes continuous. In
addition, carefully designing the switch function results in
posynomial hyperarc rate constraints.


\subsubsection{Network Optimization problem formulation}
Since there are $\Omega = (3n-1)(n^{2})$ number of paths for each
destination $d_{i} \in T$ only a subset of them will actually be
active (i.e. with min-cut $> 0$). $r^{d_{i}}_{q}$ as the rate on
path $q$ to destination $d_i$, where $q \in \Omega$. Recall,
$\mathcal{H}=\mathcal{H}_{s} \cup \mathcal{H}_{r}$ and the total
rate to a destination $d_i$ be defined as $r_{i} = \sum_{q \in
\Omega} r^{d_{i}}_{q}$. Also, let $v_{m} \in V$ be the farthest node
from $i$ for hyperarc $(u,V) \in \mathcal{H}$. Then, the
optimization program is,
\begin{eqnarray}
& \mbox{Program (B): } & \mbox{maximize } (R_{m}) \nonumber \\
& \mbox{subject to: } & R_{m} \leq   r_{i}, \forall i \in \N_n,\\
& & r_{i} \leq \displaystyle\sum_{q \in \Omega} r^{d_{i}}_{q}, \hspace{2mm} \forall d_{i} \in T, \\
& & \hspace{-30mm} \displaystyle\max_{((i,q)|(u,V) \in q)} R_{q}^{i} \leq \frac{P_{uV}}{D^{\alpha}_{uv_{m}}N_{0}} f_{uV}, \forall (u,V) \in \mathcal{H},\\
& & f_{uV} \leq f^{1}_{uV}f^{2}_{uV}, \hspace{2mm} \forall (u,V) \in \mathcal{H},\\
& & \hspace{-25mm}f^{l}_{uV} \leq (1+\gamma e^{-(\gamma
z^{l}_{uV})})^{-1},
\hspace{2mm} l \in [1,2],\forall (u,V) \in \mathcal{H}, \\
& & \hspace{-22mm} z^{l}_{uV} \leq D_{uv_{l}} - D_{ur}, \hspace{2mm} l \in [1,2],(u,V) \in
\mathcal{H}, \\
& & \hspace{-10mm} u_{uV}^{2} + v_{uV}^{2} \leq D^{2}_{ur}, \hspace{2mm} \forall (u,V) \in \mathcal{H}, \label{eq:pnorm1}\\
& & \hspace{-20mm} \displaystyle\sum_{((u,V) \in \mathcal{H}_{s})}
P_{uV} \leq P_{s}, \hspace{2mm} \displaystyle\sum_{((u,V) \in
\mathcal{H}_{r})} P_{uV} \leq P_{r},
\end{eqnarray}
\begin{eqnarray}
& \mbox{where,} & {x}_{r} - x_{u} \leq u_{uV}, \hspace{2mm} \forall (u,V) \in \mathcal{H},\\
& & {y}_{r} - y_{u} \leq v_{uV},\hspace{2mm} \forall (u,V) \in \mathcal{H},\\
& & \hspace{-20mm} x_{r} \geq 0, y_{r} \geq 0, {y}_{r} \leq \lambda
{x}_{r}, \hspace{1mm} {y}_{r} + \lambda' {x}_{r} \leq \eta.
\end{eqnarray}
In the above program $(x_{r},y_{r})$ are variable relay coordinates
and (27) captures constraints that make $\mathcal{C}$. Program (B)
is a non-convex program expressed in posynomial and polynomial
inequalities. Applying GP transformation to the following variables
$\{r^{d_i}_{p},P_{uV},D_{uV},f_{uV}\}$, $p$-norm approximation to
constraints (\ref{eq:pnorm1}) and leaving the rest of the variables
unchanged, we get the following convex approximation,
 \vspace{-1mm}
\begin{eqnarray}
\hspace{-20mm} \mbox{Program (C): }  \mbox{maximize } \min_{d_{i} \in T} \left(\sum_{q \in \Omega}{r'^{d_i}_{q}} \right) \nonumber \\
 \mbox{subject to: } \begin{split}
 N_{0} & e^{(r'^{i}_{uV} + \alpha D'_{uv_{m}} - P'_{uV} -f'_{uV})} \leq 1, \\
  & \forall q \in (u,V), \forall (u,V) \in \mathcal{H},
\end{split}\\
 e^{(f'_{uV}-f'^{1}_{uV}-f'^{2}_{uV})} \leq 1, \hspace{2mm} \forall
 (u,V) \in \mathcal{H},\\
 e^{(f'^{l}_{uV})} + e^{(f'^{l}_{uV} - \gamma z^{l}_{uV})} \leq 1,
 \hspace{2mm} l \in [1,2], \forall (u,V) \in \mathcal{H}, \\
 z^{l}_{iJ} +e^{D'_{ur}} \leq D_{uv}, \hspace{4mm} \forall (u,V) \in
 \mathcal{H},\\
\left( \frac{u_{uV}^2+v_{uV}^2}{e^{D'_{ir}}}\right)^{p} \leq 1,
\hspace{2mm} \forall
 (u,V) \in \mathcal{H}, \label{eq:OptimConsC}\\
 \displaystyle\sum_{((u,V) \in \mathcal{H}_{s})}
e^{P'_{uV}} \leq P_{s}, \hspace{2mm} \displaystyle\sum_{((u,V) \in
\mathcal{H}_{s})} e^{P'_{uV}} \leq P_{r},
\end{eqnarray}
where, we have used the GP transformation $x'=\log(x)$ ($x$ is the
original variable of program (B)) on certain variables.

Program (C) is a convex approximation of program (B) with no
underlying combinatorial hard structure. The approximation is only
coming from constraint type (\ref{eq:OptimConsC}) using $p$-norm
surrogation technique that gives a convex approximation to the
constraint (\ref{eq:pnorm1}) in program (B).

Note, that the objective function is modified in (C), instead of
having the sum of positive exponential terms, we have replaced it
 by a sum of linear functions, which is far easier to maximize. The maximizers of the
function $\displaystyle\max_{X} (x_{1}+x_{2})$ also maximizes the
function $\displaystyle\max_{X}
 (e^{x_{1}}+e^{x_{2}})$, over a compact set $X$ with certain particular characteristics.
 This is generally not true, but in our case
 due to network coding constraints with certain simple tricks it can be proven that it holds true. Due to the
 lack of space we omit the detailed proof.

As we know, with the increasing value of $p$, the program approaches
a complete convexity with zero duality gap, thus standard convex
optimization algorithms could be used to solve problem (C) with
increasing accuracy.
The optimal values of program (B)
could be easily constructed from the optimal values of program (C).
\subsection{Simulations}
\begin{figure}[tp]
\begin{center}
\includegraphics[width=0.9\columnwidth]{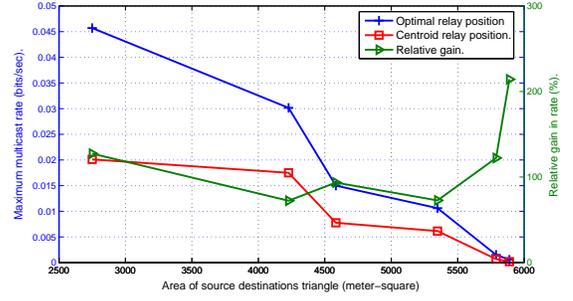}
\end{center}
\vspace{-4mm}
\caption{{BRC with $n=2$ destinations. Rates for $r$ located at the centroid of $\triangle_{s d_1 d_2}$, at the optimal position, and relative gain.}}
\label{fig:SimuResults}
\vspace{-6mm}
\end{figure}

In this section, we present simulation results for the BRC with $n =
2$ destinations. We compare the multicast rate obtained by
optimizing the relay location and the source in addition to relay
power allocations, with the case where the relay is located at a
naive yet seemingly interesting position: the centroid of triangle
$\triangle_{s d_1 d_2}$ , and only the power allocations are
optimized. The simulations are run for an increasing size of the
area of $\triangle_{s d_1 d_2}$ and a random network topology for
each area is chosen.

Figure~\ref{fig:SimuResults} shows the maximum multicast rate (blue
and red) for optimal and centroid relay positions respectively. The
SNR $\frac{P}{N_0}$ is normalized to 1. Note that the actual values
of the rates are not as important because of the normalization, for
higher power values, the rate would certainly have higher values.
For increasing area of triangle $\triangle_{s d_1 d_2}$, the maximum
multicast rate tends to drop, which is due to the constrained power
and larger distances, but the relative gain goes up. This implies
that for farther placed nodes the sensitivity of the relay location
is higher and can produce significant gains in rate for the optimal
relay location. Its clear from the results in
Figure~\ref{fig:SimuResults} that the centroid is not the optimal
location. The rise in the relative gain becomes very strong due to
the fact that the low-SNR regime is more sensitive to the location
of nodes (hence, distances) that determine the hyperarc rates in the
limit of disappearing SNR as opposed to e.g. in high SNR regime,
where a displacement of $\pm \epsilon$ for the location of $r$ would
not effect the rate significantly.

\section{ARBITRARY CASE} \label{sec:Arbitrary}
\vspace{-2mm} In this section, we answer the same set of questions
but for arbitrarily placed source and destination nodes. Almost all
concepts can be carried over to, straightforwardly.


The steps of the pre-processing stage can be summarized as

\noindent Input: $\{s, T \}$ set of nodes with their cartesian
coordinates.
\begin{enumerate}
\item Compute convex hull $\mathcal{C}$.
\item Compute $\mathcal{H}_{s}$, switch functions (using Lemma 2).
\item Compute the disjoint convex regions by
superimposing all $k$-order Voronoi diagrams of $\mathcal{C}$.
\item Compute $\mathcal{H}_{r}$ (using relay hyperarc algorithm).
\end{enumerate}
Output: $\mathcal{H} =\mathcal{H}_{s} \cup \mathcal{H}_{s}$.

Once we have $\mathcal{H}_{s}$ and $\mathcal{H}_{s}$, we can compute
$\Omega=[1,|\mathcal{H}_{s}| \times |\mathcal{H}_{r}|]$. Ultimately,
the optimization program could be stated as, \vspace{-6mm}
\begin{eqnarray}
&\mbox{Program (D): }& \mbox{maximize } \left( g_{0}(x) \right)
\nonumber
\end{eqnarray}
\vspace{-6mm}
\begin{eqnarray}
& \mbox{subject to:} & g_{i}(x) \leq 1, \hspace{2mm} i \in [1,k], \label{eq:progD1}\\
& & g_{j}(x) \leq 1, \hspace{2mm} j \in [k+1,K], \label{eq:progD2}
\end{eqnarray}
where, the constraints (\ref{eq:progD1}) are the posynomials
constraints that can be transformed to convex convex constraints
using GP and constraints (\ref{eq:progD2}) are the non-posynomial
constraints that are approximated using $p$-norm approximation. The
objective function $g_{0}(x)$ represents the multicast rate. In
program (D), $x$ is a vector of variables.

Program (D), is an abstract representation of the actual program.
Since, the program (D) is simply program (B) (but for arbitrary
placement of destination nodes), the structure of (D) is the same as
(B). The main difference is the pre-processing stage for the two
cases, in this case which involves computation of $k$ nearest
neighbor nodes and superimposing them to form disjoint regions in
$\mathcal{C}$ for distinct $n$-nearest neighbors. We would like to
note, that this computation, although polynomially bounded, can be
heavy. There are many polynomial time algorithms in the literature
of computational geometry that solve this problem efficiently,
\cite{Edelsbrunner-book1987}.

\section{RESULTS AND CONCLUSION} \label{sec:Conclusion}
\vspace{-2mm} A comprehensive and efficient solution is developed to
model and answer the problem of optimal relay positioning so as to
maximize the multicast rate from the source $s$ to the destination
set $T$ in a low-SNR network. The proposed solution is a non-convex
network optimization problem in its basic form that is difficult to
solve. Using GP, switch functions and $p$-norm surrogate
approximation we transform the problem to a convex approximation
that can be solved using standard convex optimization algorithms.

 To abridge, the important
contributions of this work could be summed up in the following
words: we present a comprehensive approach to determine the optimal
relay position under the pretext of network optimization problem.
Network topologies consisting single source, multiple destinations
with the only intermediate node as relay are considered in complete
generality on a $2$-D plane. Using superposition coding and
frequency division we construct a wireline like hypergraph. The
low-SNR hyperarc model using superposition coding provides an
interference free network model that is easily scalable to complex
network topologies. Using the tools of computational geometry and
network optimization, we presented a network optimization framework
based solution that is intuitive and easy to understand. Also, we
show that positioning the relay optimally significantly affect the
network performance.

In addition, the main causes for complexity in our approach are the
pre-processing stage and the non-convexity arising from
non-posynomial constraints upon GP transformation. The former reason
could be somewhat tolerated, as generally for solving network
planning problems heavy pre-processing is required. In contrast, in
our case the pre-processing stage consists of polynomial time
operations at the cost of only slight sub-optimality in
approximation.

The questions our work answers are just a fraction of the
interesting questions that it opens up. An interesting direction
would be to extend this model to general multicommodity flow
optimization problems involving more number of relay nodes. On the
other hand, from the computational point of view, an interesting
question is how we can efficiently  build the exact number of paths
essentially bringing down the size of the network optimization
program. Finding other techniques to model this problem could be
interesting, e.g. utilizing geometric properties of the problem and
ways to bring down the computational complexity.

\bibliographystyle{./Biblio/IEEEtran}

\bibliography{./IEEEabrv,./bibLowSNR,./bibNC,./bibCellularStd,./bibOptim}

\end{document}